\newcommand{\ignore}[1]{}
\newtheorem{theorem}{Theorem}[section]
\newtheorem{lemma}[theorem]{Lemma}
\def\eps{{\varepsilon}}
\def\bd{{\partial}}
\def\A{{\cal A}}
\def\C{{\cal C}}
\def\K{{\cal K}}
\def\R{{\cal R}}
\def\T{{\cal T}}
\def\SS{{\mathbb S}}
\def\reals{{\mathbb R}}
\begin{document}

\begin{titlepage}

\title{Improved Bounds for Geometric Permutations\thanks{%
Work by Haim Kaplan has been supported by Grant 2006/204 from the U.S.-Israel Binational Science Foundation and by Grant 975/06 from the
Israel Science Fund. Work by Micha Sharir was partially supported by
NSF Grant CCF-08-30272, by Grant 2006/194 from the U.S.-Israel Binational
Science Foundation, by grant 338/09 from the Israel Science Fund,
and by the Hermann Minkowski--MINERVA
Center for Geometry at Tel Aviv University. This work is part of the first author's Ph.D. Dissertation prepared under the supervision of the other two authors. A preliminary version of this paper appeared in Proc. 51st Symposium on Foundations of Computer Science, 2010.}}

\author{
Natan Rubin\thanks{%
School of Computer Science, Tel Aviv University, Tel Aviv 69978,
Israel.
E-mail: {\tt rubinnat@post.tau.ac.il }} \and
Haim Kaplan\thanks{%
School of Computer Science, Tel Aviv University, Tel Aviv 69978,
Israel.
E-mail: {\tt haimk@post.tau.ac.il }} \and
Micha Sharir\thanks{%
School of Computer Science, Tel Aviv University, Tel Aviv 69978,
Israel, and Courant Institute of Mathematical Sciences, New York
University, New York, NY 10012, USA. E-mail: {\tt
michas@post.tau.ac.il }} }
\maketitle

\begin{abstract}
We show that the number of geometric permutations of an arbitrary
collection of $n$ pairwise disjoint convex sets in $\reals^d$, for $d\geq 3$,
is $O(n^{2d-3}\log n)$, improving Wenger's 20 years old bound of 
$O(n^{2d-2})$.
\end{abstract}

\maketitle

\end{titlepage}

\section{Introduction} \label{sec:intro}

Let $\K$ be a collection of $n$ convex sets in $\reals^d$. 
A line $\ell$ is a \emph{transversal} of $\K$ if it intersects 
all the sets in $\K$.
If the objects in $\K$ are \emph{pairwise disjoint}, an oriented
line transversal meets them in a well-defined order, called a 
{\em geometric permutation}. The study of geometric permutations 
plays a central role in geometric transversal theory; 
see \cite{GPW1,Wen-surv} for compherensive surveys.

\medskip
\noindent{\bf Previous work.}
In 1985, Katchalski et al.~\cite{KLZ} initiated the study 
of the maximum possible number $g_d(n)$ of geometric permutations 
induced by a set $\K$ of $n$ pairwise disjoint convex 
objects in $\reals^d$.  They constructed, for any $n\ge 4$,
a family of $n$ pairwise disjoint convex sets in $\reals^2$ 
that admits $2n-2$ geometric permutations.  
Edelsbrunner and Sharir \cite{ES} showed, five 
years later, that this bound is tight in the worst case,
implying that $g_2(n)=2n-2$. Wenger \cite{Wen2} proved, also 
in 1990, that $g_d(n)=O(n^{2d-2})$ in any dimension $d\geq 3$.
In 1992, Katchalski et al.~\cite{KLL} generalized their lower 
bound construction and showed that there exist collections of 
$n$ pairwise disjoint convex sets in $\reals^d$, for any $d\geq 3$, which admit $\Omega(n^{d-1})$ 
geometric permutations.  Since then, closing (or even reducing) 
the fairly large gap between these upper and lower bounds 
on $g_d(n)$, in any dimension $d\geq 3$, has remained one
of the major long standing open problems in geometric 
transversal theory.

Several partial steps towards this goal were made in the past 
decade. Most of them deal with geometric permutations of 
certain restricted families of pairwise disjoint convex bodies 
in $\reals^d$.  For example, Smorodinsky et al.~\cite{SMS} 
derived a tight upper bound of $\Theta(n^{d-1})$ on the number of 
geometric permutations induced by an arbitrary collection of $n$ 
pairwise disjoint balls in $\reals^d$.  
Katz and Varadarajan \cite{KV} generalized this result to 
arbitrary collections of $n$ pairwise disjoint {\em fat} 
convex bodies.  Other recent works \cite{CGN,HXC,KSZ,ZS} show 
that the maximum possible number of geometric permutations 
induced by pairwise disjoint {\em unit} balls (or, more generally, 
balls of bounded size disparity) is constant in any 
dimension. 

Other studies bound the number of geometric permutations induced by 
arbitrary collections of pairwise disjoint convex sets, whose 
realizing transversal lines belong to some restricted subfamily
of lines in $\reals^d$. For example, Aronov and Smorodinsky \cite{ArSm} 
derive a tight bound of $\Theta(n^{d-1})$ on the maximum number of 
geometric permutations realized by lines that pass through a 
fixed point in $\reals^d$.  A recent paper \cite{KRS} by the 
authors studies line transversals of arbitrary convex polyhedra 
in $\reals^3$ and derives (as a byproduct) an improved upper 
bound of $O(n^{3+\eps})$, for any $\eps>0$, on the number of 
geometric permutations realized by lines which pass through a 
fixed line in $\reals^3$. 

\medskip
\noindent{\bf The space of line transversals.}
Lines in $\reals^d$ have $2d-2$ degrees of freedom, and are 
naturally represented in a real projective space (so-called the 
{\em Grassmannian manifold}; see~\cite{GPW1}). 
However, for the purpose of combinatorial analysis, we can 
represent them (with the exclusion of some ``negligible'' subset 
which we may ignore) by points in the real Euclidean space
$\reals^{2d-2}$; see \cite{GPW1} for more details.  

Let $\K$ be a collection of $n$ convex sets in $\reals^d$, not
necessarily pairwise disjoint.
The \emph{transversal space} $\T(\K)$ of $\K$ is the set in
$\reals^{2d-2}$ of all (points representing) the transversal 
lines of $\K$.

If the sets of $\K$ are pairwise disjoint then any two lines in the 
same connected component of $\T(\K)$ induce the same geometric 
permutation, so the number of geometric permutations is upper bounded
by the number of components of $\T(\K)$.  In two dimensions, the 
converse property also holds. That is, lines that stab $\K$ in a 
fixed order form a single connected component of $\T(\K)$; 
see, e.g., \cite{GPW}. Thus, according to \cite{ES}, the 
transversal space $\T(\K)$, of any family $\K$ of $n$ pairwise 
disjoint convex sets in $\reals^2$ has at most $2n-2$ 
connected components.

The situation becomes considerably more complicated already in 
$\reals^3$: There exist collections of four (pairwise disjoint) 
convex sets whose transversal space consists of an arbitrarily 
large number of connected components \cite{GPW,KRS}.
This is a simple instance of the phenomenon that the shape
of $\T(\K)$ depends on the shape of the sets in $\K$, and may 
grow out of control if we do not impose any restrictions on the 
sets of $\K$.
This might explain (in part) the difficulty of extending 
the relatively simple analysis of the number of geometric 
permutations in $\reals^2$ to higher dimensions. 

In three dimensions, if the sets in $\K$ have {\em constant description complexity} 
(i.e., each set can be described as a Boolean combination of a
constant number of polynomial equalities and inequalities of constant
maximum degree) then one can obtain sharp bounds on the combinatorial
complexity of $\T(\K)$ 
(see \cite{KS,Wen-surv} for a precise definition).
Specifically, the analysis of Koltun and 
Sharir \cite{KS} yields an improved bound of $O(n^{3+\eps})$, 
for any $\eps>0$, on the combinatorial complexity, and thus also on
the number of connected components of $\T(\K)$, for collections $\K$
of this kind. (If $\K$ is a collection of $n$ triangles in $\reals^3$, an improved bound of $O(n^3\log n)$ holds, see \cite{Pankaj}.) Hence, this also serves as an upper bound on the 
number of geometric permutations induced by any such collection $\K$.
Using this approach, and continuing to assume that
the sets in $\K$ have constant description complexity, 
one can strengthen Wenger's bound of $O(n^{2d-2})$~\cite{Wen2} to
apply to the combinatorial complexity of $T(\K)$, and not just to
the number of geometric permutations. The strength (and beauty) of
Wenger's analysis is that it yields this bound without making any 
assumptions whatsoever on the shape of the sets in $\K$ 
(other than being convex and pairwise disjoint).

\medskip
\noindent{\bf Our results.}
We first show that the number of geometric permutations admitted by 
\emph{any} collection of $n$ pairwise disjoint convex sets 
in $\reals^3$ is $O(n^3\log n)$, thus improving Wenger's previous 
upper bound on $g_3(n)$ roughly by a factor of $n$.
Our approach can be generalized to higher dimensions, and yields 
an improved upper bound of $O(n^{2d-3}\log n)$ on $g_d(n)$, 
for any $d\geq 3$. (In three dimensions, our bound is also a slight improvement of 
the bound $O(n^{3+\eps})$, for any $\eps>0$, of \cite{KS} 
for the case where the sets in $\K$ have constant description 
complexity.) 

Here is a brief overview of our solution in $\reals^3$.
Following the approach of Wenger \cite{Wen2}, we represent the 
directions of transversal lines by points on the unit $2$-sphere 
$\SS^2$, separate every pair of objects in $\K$ by a plane, and
associate with each such plane the great circle on $\SS^2$ parallel 
to it. We then consider the arrangement $\A$ of the resulting
${n\choose 2}$ great circles on $\SS^2$, which
consists of $O(n^4)$ 2-faces. The crucial observation made in 
\cite{Wen2} is that all transversal lines, whose directions belong to the same 2-face of $\A$, stab the sets of $\K$ in the same order (if the face contains such directions at all).  Hence, the number of geometric 
permutations is upper bounded by the total number of 2-faces of 
$\A$, implying that $g_3(n)=O(n^4)$.

We improve this bound by showing that the actual number of faces 
which contain at least one direction of a transversal line 
(so-called {\em permutation faces}) is only $O(n^3\log n)$.  
Moreover, we show that the overall number of edges and vertices 
on the boundaries of these faces is also at most $O(n^3\log n)$.

The analysis proceeds in two steps.
First, we use a direct geometric analysis to show that the number 
of vertices whose four incident faces are all permutation faces 
is $O(n^3)$. We refer to such vertices as \emph{popular vertices}.
Informally, we associate with each popular vertex $v$ (with 
the possible exception of $O(n^3)$ ``degenerate'' ones) the 
intersection line $\lambda_v$ of the two separating planes $h,h'$
that correspond to the two circles incident to $v$, and show that
$\lambda_v$ stabs exactly $n-4$ sets of $\K$ (all but the sets in the 
two pairs separated by $h$ and $h'$, respectively).  We then apply,
within each of the ${n \choose 2}$ separating planes, the linear 
bound on the number of geometric permutations in $\reals^2$, 
due to Edelsbrunner and Sharir \cite{ES}, combined with a simple
application of the Clarkson-Shor probabilistic analysis
technique~\cite{CS}, and thereby obtain the overall $O(n^3)$ 
asserted bound on the number of popular vertices. 

We then use this bound to analyze the overall number of vertices incident to permutation faces.
This is achieved by a refined (and simplified) variant of the 
charging scheme of Tagansky \cite{Tagansky}. 

The analysis can be extended to any dimension $d\geq 4$, but its technical details become somewhat more involved.

The paper is organized as follows. We first derive the
nearly-cubic upper bound on $g_3(n)$. To this end, we begin in
Section~\ref{Sec:Prelim} by introducing some notations and the infrastructure, and then
establish this bound in Section~\ref{Sec:Perms3D}. In
Section \ref{Sec:Higher}, we extend the analysis to any dimension
$d\ge 4$. 

\section{Preliminaries}\label{Sec:Prelim}

\noindent{\bf The setup in $\reals^3$.}
Let $\K$ be a collection of $n$ arbitrary pairwise disjoint
convex sets in $\reals^3$. We may also assume, without loss of 
generality, that the elements of $\K$ are \emph{compact}. Indeed,  
let $g_d(n)$ be the maximum possible number of geometric
permutations induced by a collection of $n$ pairwise disjoint
\emph{compact} convex sets in $\reals^3$. Let $\K=\{K_1,\ldots,K_n\}$ be a collection
of $n$ \emph{arbitrary} pairwise disjoint convex sets in $\reals^3$,
which induces $m$ geometric permutations, realized by $m$ respective
lines $\ell_1,\ldots,\ell_m$. For each $1\leq i\leq n$ and 
$1\leq j\leq m$, let $p_{ij}$ denote an arbitrary point in 
$K_i\cap \ell_j$. For each $1\leq i\leq n$, let $K'_i$ denote the 
convex hull of $\{p_{ij}\mid 1\leq j\leq m\}$, and observe that 
$K'_i$ is a compact convex subset of $K_i$. Hence
$\K'=\{K'_1,\ldots,K'_n\}$ is a collection of $n$ pairwise disjoint
\emph{compact} convex sets, which induces (at least) $m$ geometric
permutations (realized by the same lines $\ell_1,\ldots,\ell_m$), 
so $m\leq g_d(n)$.

We use the
following setup, introduced by Wenger \cite{Wen2} and briefly mentioned in the introduction, 
to analyze geometric permutations of $\K$. Enumerate the elements of $\K$ as $K_1,K_2,\ldots,K_n$. 
For each $1\leq i<j\leq n$ we fix some plane $h_{ij}$ which strictly
separates $K_i$ and $K_j$. 
We orient $h_{ij}$ so that $K_i$ lies 
in the open negative halfspace $h_{ij}^-$ that it bounds, and $K_j$ 
lies in the open positive halfspace $h_{ij}^+$.
We represent directions of (oriented) lines in $\reals^3$ by points on the 
unit 2-sphere $\SS^{2}$. Without loss of generality we may assume 
that the planes $h_{ij}$ are in {\em general position}, meaning 
that every triple of them intersect at a single point,
and no four meet at a common point.

Each separating plane $h_{ij}$ induces a great circle $C_{ij}$ on 
$\SS^2$, formed by the intersection of $\SS^2$ with the plane 
parallel to $h_{ij}$ through the origin. Equivalently, $C_{ij}$ 
is the locus of the directions of all lines parallel to $h_{ij}$.
$C_{ij}$ partitions $\SS^{2}$ into two open hemispheres 
$C_{ij}^+$, $C_{ij}^-$, so that $C_{ij}^+$ (resp., $C_{ij}^-$)
consists of the directions of lines which cross $h_{ij}$ from 
$h_{ij}^-$ to $h_{ij}^+$ (resp., from $h_{ij}^+$ to $h_{ij}^-$). 
Note that lines whose directions lie in $C_{ij}$ cannot stab 
both $K_i$ and $K_j$.  Thus, any oriented common transversal 
line of $K_i$ and $K_j$ intersects $K_j$ after (resp., before) $K_i$ 
if and only if its direction lies in $C_{ij}^+$ (resp., $C_{ij}^-$).

Put $\C(\K)=\{C_{ij}\mid 1\leq i<j\leq n\}$, and consider the 
arrangement $\A(\K)$ of the $n\choose 2$ great circles of $\C(\K)$. 
The assumption that the planes $h_{ij}$ are in general position 
is easily seen to imply that the circles in $\C(\K)$ are also 
in general position, in the sense that no pair of them coincide
and no three have a common point.  Each 2-face $f$ of $\A(\K)$ 
induces a relation $\prec_f$ on $\K$, in which $K_i\prec_f K_j$ 
(resp., $K_j\prec_f K_i$) if $f\subseteq C_{ij}^+$ (resp., 
$f\subseteq C_{ij}^-$).  Clearly, the direction of each oriented 
line transversal $\lambda$ of $\K$ belongs to the unique 2-face $f$
of $\A(\K)$ whose relation $\prec_f$ coincides with the order 
in which $\lambda$ visits the sets of $\K$ (as noted above, the 
direction of $\lambda$ cannot lie on an edge or at a vertex of 
$\A(\K)$). In particular, the number of geometric permutations 
is bounded by the number of 2-faces of $\A(\K)$, which is $O(n^{4})$.

This is the way in which Wenger established this upper bound (in 
three dimensions) 20 years ago~\cite{Wen2}. Moreover, this
approach can be extended to any dimension $d\ge 3$, and yields 
the upper bound $O(n^{2d-2})$ on $g_d(n)$; see \cite{Wen2} and
Section~\ref{Sec:Higher} below. 
The main weakness of this
argument (as follows from the analysis in this paper) is that most
faces of $\A(\K)$ do not induce a geometric permutation of $\K$.
Specifically, for some faces $f$ the relation $\prec_f$ might have cycles, in which case $f$ clearly cannot contain the direction of a transversal of $\K$. But even if $\prec_f$ is acyclic (and thus a total order) there need not exist any line transversal with direction in $f$.


\medskip
\noindent{\bf More definitions.} 
We need a few more notations.  We call a 2-face of $\A(\K)$ a 
\emph{permutation face} if there is at least one line transversal 
of $\K$ whose direction belongs to $f$.  Note, however, that
the directions of the line transversals of $\K$ within a fixed
permutation face $f$ is only a subset of $f$, which need not even 
be connected; see, e.g., a construction in \cite{GPW} and the
introduction.

Each pair of great circles of $\C(\K)$ intersect at exactly two 
antipodal points of $\SS^2$. By the general position assumption, 
all the circles are distinct, and each vertex $v$ of $\A(\K)$ is 
incident to exactly two great circles.  Hence, each vertex is 
incident to exactly four (distinct) faces of $\A(\K)$. Assuming that
$|\K|\geq 3$, $\C(\K)$ contains at least three great circles, so the
boundary of each cell of $\A(\K)$ contains at least three vertices.
This, and the fact that each vertex is incident to four faces, imply
that the number of permutation faces in $\A(\K)$ is at most 
proportional to the overall number of their vertices. It is this latter
quantity that we proceed to bound.

We say that vertex $v$ in $\A(\K)$ is \emph{regular} if the two
great circles $C_{ij}$, $C_{k\ell}$ incident to $v$ are defined 
by four \emph{distinct} sets of $\K$; otherwise, when only three 
of the indices $i,j,k,\ell$ are distinct, we call $v$ a 
{\em degenerate} vertex.  Clearly, the number of degenerate 
vertices is $O(n^{3})$, so it suffices to bound the number of 
regular vertices of permutation faces.

In the forthcoming analysis we will use subcollections $\K'$ of $\K$,
typically obtained by removing one set, say $K_q$, from $\K$. Doing so
eliminates all separating planes $h_{iq}$, for $i=1,\ldots,q-1$, and
$h_{qi}$, for $i=q+1,\ldots,n$. Accordingly, the corresponding circles
$C_{iq}$, $C_{qi}$ are also eliminated from $\C(\K')$, and $\A(\K')$
is constructed only from the remaining circles. In particular, a
regular vertex $v$ of $\A(\K)$, formed by the intersection of 
$C_{ij}$ and $C_{k\ell}$, remains a 
vertex of $\A(\K')$ if and only if $q\ne i,j,k,\ell$. An edge 
(resp., face) of $\A(\K')$ may contain several edges (resp., faces) 
of $\A(\K)$. Note that if $f'$ is a face of $\A(\K')$ which contains 
a permutation face $f$ of $\A(\K)$ then $f'$ is a permutation face 
in $\A(\K')$; the permutation that it induces is the permutation 
of $f$ with $K_q$ removed.

\section{The Number of Geometric Permutations in $\reals^3$}
\label{Sec:Perms3D}

Our analysis uses the setup of Tagansky \cite{Tagansky},
somewhat adapted to our context.  To make this paper more 
self-contained, we will spell out many of the
details of the technique as it applies in our context.

\medskip
\noindent{\bf Popular vertices and edges.}
We say that an edge $e$ of $\A(\K)$ is {\em popular} if its
two incident faces are both permutation faces. We say that a vertex
$v$ of $\A(\K)$ is {\em popular} if its four incident faces are all
permutation faces.  We establish the upper bound $O(n^3)$ on the 
number of popular vertices, using a direct geometric 
argument.  The analysis then proceeds by applying two charging
schemes. 
The first scheme results in a recurrence which expresses 
the number of popular edges in terms of the number of popular 
vertices. The second scheme leads to a recurrence which expresses 
the number of vertices of permutation faces in terms of the number 
of popular edges. The solutions of both recurrences are nearly cubic.
Naive (and simpler) implementation of both schemes incurs an 
extra logarithmic factor in each recurrence, resulting in the
overall bound $g_3(n) = O(n^3\log^2n)$. With a more careful analysis
of the second scheme, we are able to eliminate one of these factors,
and thus obtain the bound $g_3(n) = O(n^3\log n)$. 

\subsection{The number of popular vertices}
\label{subsec:pop3D}

For a regular vertex $v$ of $\A(\K)$, formed by the intersection 
of $C_{ij},C_{k\ell}\in \C(\K)$, we denote by $\K_v$ the collection 
$\{K_{i},K_{j},K_{k},K_{\ell}\}$ of the four sets defining (the circles meeting at) $v$.

\begin{lemma}\label{Thm:ConsecutivePairs}
Let $v$ be a regular popular vertex of
$\A(\K)$, incident to $C_{ij},C_{k\ell}\in \C(\K)$. \\
(i) Each pair of sets $K_a\in \K_v$ and $K_b\in \K\setminus\K_v$ 
appear in the same order in all four permutations induced by the 
faces incident to $v$.\\
(ii) The elements of each pair $K_i,K_j$ and $K_k,K_\ell$ are 
consecutive in all four permutations induced by the faces 
incident to $v$.
\end{lemma}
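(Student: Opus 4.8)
The plan is to analyze what happens to the arrangement $\A(\K)$ locally around a regular popular vertex $v = C_{ij} \cap C_{k\ell}$, using the fact that the two great circles $C_{ij}, C_{k\ell}$ partition a small neighborhood of $v$ on $\SS^2$ into four faces $f_1, f_2, f_3, f_4$, which are exactly the four faces incident to $v$. Each $f_m$ is a permutation face by hypothesis, so it contains the direction of some transversal line $\lambda_m$ of all of $\K$. The key observation I would exploit is the one recorded in the preliminaries: for any pair $K_a \prec K_b$, the relative order in which a transversal visits $K_a, K_b$ is determined by which open hemisphere ($C_{ab}^+$ or $C_{ab}^-$) contains its direction. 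So the relative orders among pairs are ``read off'' from the position of the face relative to each great circle in $\C(\K)$.

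For part (i): fix $K_a \in \K_v$ (so $a \in \{i,j,k,\ell\}$) and $K_b \in \K \setminus \K_v$. The great circle $C_{ab}$ (with indices suitably ordered) is one of the circles of $\C(\K)$ and, since $v$ is regular and $b \notin \{i,j,k,\ell\}$, the circle $C_{ab}$ is distinct from both $C_{ij}$ and $C_{k\ell}$; by the general position assumption no three circles meet at a point, so $C_{ab}$ does not pass through $v$. Hence a sufficiently small neighborhood of $v$ lies entirely in one open hemisphere bounded by $C_{ab}$ — that is, all four faces $f_1,\dots,f_4$ lie on the same side of $C_{ab}$. Consequently $K_a$ and $K_b$ appear in the same relative order in all four induced permutations, which is precisely assertion (i).

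For part (ii): I want to show $K_i, K_j$ are consecutive (and symmetrically $K_k, K_\ell$). First, by assertion (i), every $K_b \notin \K_v$ sits on a fixed side of $K_i$ and on a fixed side of $K_j$ in all four permutations; and $K_k, K_\ell$ also lie on a fixed side of $\{K_i, K_j\}$ in each permutation since the only circle separating these relations from $v$ would be $C_{ij}$ itself (the pairs involving $k$ or $\ell$ against $i$ or $j$ are governed by circles $C_{ik}$, etc., none of which pass through $v$ by regularity and general position). So the only pair among all of $\K$ whose relative order is \emph{not} constant across the four faces is $\{K_i, K_j\}$ — it flips exactly when we cross $C_{ij}$. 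The same goes for $\{K_k, K_\ell\}$ across $C_{k\ell}$. I would then argue: in each of the four permutations, the set of elements lying ``between'' $K_i$ and $K_j$ is the same (being determined by the constant relative orders of those elements against both $K_i$ and $K_j$), call it $S$. If $S \ne \emptyset$, pick $K_b \in S$; then in two of the faces we have $K_i \prec K_b \prec K_j$ and in the other two $K_j \prec K_b \prec K_i$, contradicting the constancy of the order of $K_b$ relative to $K_i$ (or to $K_j$) established in (i) — unless $K_b \in \{K_k,K_\ell\}$, so I must handle that case separately by noting that the relative order of $\{K_k,K_\ell\}$ versus each of $K_i, K_j$ is also constant, so $K_k, K_\ell$ cannot lie strictly between $K_i$ and $K_j$ in one face and outside in another. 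Hence $S = \emptyset$ and $K_i, K_j$ are consecutive; symmetrically for $K_k, K_\ell$.

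The main obstacle I anticipate is the bookkeeping in part (ii): carefully checking that \emph{all} pairs other than $\{K_i,K_j\}$ and $\{K_k,K_\ell\}$ have their relative order fixed near $v$ — in particular the cross pairs like $\{K_i, K_k\}$ — and then converting ``fixed relative orders'' into ``the between-set is constant and must be empty.'' The cleanest route is probably to phrase it as: a transversal's permutation is completely determined by the sign vector of its direction with respect to all ${n \choose 2}$ circles, and near $v$ only two coordinates of this sign vector (those of $C_{ij}$ and $C_{k\ell}$) vary; then a short combinatorial lemma says that if a total order on $\K$ changes only by possibly swapping $K_i$ with $K_j$ and possibly swapping $K_k$ with $K_\ell$, and every such swap is actually realized, then each of these pairs must be adjacent. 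I would prove that combinatorial lemma directly (a swap of a non-adjacent pair would change the relative order of the pair with whatever lies between them).
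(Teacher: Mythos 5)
Your proposal is correct and follows essentially the same route as the paper: near $v$ only the two circles $C_{ij},C_{k\ell}$ separate the four incident faces, so the four induced permutations can differ only by swapping $K_i\leftrightarrow K_j$ and/or $K_k\leftrightarrow K_\ell$, which gives (i) at once and yields (ii) via the combinatorial observation that a realized swap of a non-adjacent pair would force some element's order relative to $K_i$ or $K_j$ (or transitivity) to be violated. The only difference is that you spell out the "between-set" argument (including the case $K_b\in\{K_k,K_\ell\}$) that the paper dismisses as "easily seen".
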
  
\begin{proof}
Any two distinct faces $f,g$ incident
to $v$ are separated only by one or two great circles from
$\{C_{ij},C_{k\ell}\}$, so the orders $\prec_f$ and $\prec_g$
may disagree only over the pairs $(K_i,K_j)$ and $(K_k,K_\ell)$.
As a matter of fact, the four permutations are obtained from each
other only by swapping $K_i$ and $K_j$ and/or swapping $K_k$ and 
$K_\ell$. This is easily seen to imply both parts of the lemma.
\end{proof}

\begin{lemma}\label{Thm:PopularLine}
Let $v$ be a regular popular vertex in
$\A(\K)$, incident to $C_{ij},C_{k\ell}\in \C(\K)$. Then the line $\lambda_v=h_{ij}\cap h_{k\ell}$ stabs all
the $n-4$ sets in $\K\setminus\K_v$, and misses all four sets in $\K_v$.
\end{lemma}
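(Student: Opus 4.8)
The plan is to exploit the two swap structure of the four permutations around $v$, established in Lemma~\ref{Thm:ConsecutivePairs}, together with the geometric meaning of the hemispheres $C_{ij}^\pm$ and $C_{k\ell}^\pm$. First I would fix an arbitrary set $K_b \in \K\setminus\K_v$ and show that $\lambda_v$ stabs $K_b$. The key observation is that $\lambda_v = h_{ij}\cap h_{k\ell}$ has direction lying on both $C_{ij}$ and $C_{k\ell}$ — indeed $\lambda_v$ is parallel to both separating planes — so its direction is precisely (one of the two antipodal points representing) $v$ itself. Now take any line transversal $\mu$ whose direction lies in one of the four faces incident to $v$. Such a $\mu$ exists because $v$ is popular; moreover, by part (i) of Lemma~\ref{Thm:ConsecutivePairs}, the relative order of $K_b$ with respect to each of $K_i, K_j, K_k, K_\ell$ is the same in all four faces, hence also ``in the limit'' at $v$. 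The idea is then to rotate/translate $\mu$ into $\lambda_v$: since the direction of $\mu$ can be taken arbitrarily close to $v$, and $\mu$ meets $K_b$, a compactness argument (recall that all sets in $\K$ are compact, as arranged in Section~\ref{Sec:Prelim}) should force a line with direction exactly $v$ — which, among lines meeting $K_i,\ldots,K_\ell$ in the forced orders, can only be $\lambda_v$ up to the ambiguity that $\lambda_v$ itself misses the four sets of $\K_v$ — to meet $K_b$ as well.

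The cleaner route, which I would actually pursue, is more direct and avoids limiting arguments. Since $v$ is popular, pick a line transversal $\mu$ with direction in a face $f$ incident to $v$, stabbing the sets in some order consistent with $\prec_f$. Because $\mu$ stabs $K_i$ and $K_j$ and its direction lies in, say, $C_{ij}^+$, while the direction of $\lambda_v$ lies on $C_{ij}$, I can continuously rotate the direction of $\mu$ towards $v$ inside $\overline{f}$; as long as the direction stays off $C_{ij}$ and $C_{k\ell}$, the line can be kept as a transversal of $\K\setminus\K_v$ (the orders with these sets never need to change, by part (i)), and in particular it keeps meeting $K_b$. Pushing the direction all the way to $v$ produces a transversal $\mu_0$ of $\K\setminus\K_v$ whose direction is exactly $v$, i.e.\ $\mu_0$ is parallel to both $h_{ij}$ and $h_{k\ell}$. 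A line parallel to $h_{ij}$ that also meets sets on both sides of $h_{ij}$ would have to cross $h_{ij}$, a contradiction — so $\mu_0$ cannot meet both $K_i$ and $K_j$, consistent with $\mu_0$ missing $\K_v$. The final step is to argue $\mu_0 = \lambda_v$: $\mu_0$ is parallel to $h_{ij}$ and to $h_{k\ell}$, and any line parallel to both of two planes in general position is parallel to their intersection line; then one must pin down that it actually lies on $h_{ij}\cap h_{k\ell}$ and not merely parallel to it, which is where the separating structure is used — since $K_b\in h_{ij}^- \cup h_{ij}^+$ is stabbed but $\mu_0$ does not cross $h_{ij}$, $\mu_0$ must lie in the one halfspace containing $K_b$; running this over all $b$ and using that the sets of $\K\setminus\K_v$ are distributed on both sides of each separating plane forces $\mu_0\subseteq h_{ij}$, and symmetrically $\mu_0\subseteq h_{k\ell}$, so $\mu_0 = h_{ij}\cap h_{k\ell} = \lambda_v$.

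It then follows that $\lambda_v = \mu_0$ stabs every set in $\K\setminus\K_v$ (since each such $K_b$ was stabbed throughout the rotation), and $\lambda_v$ misses all of $K_i, K_j, K_k, K_\ell$: it lies on $h_{ij}$, which strictly separates $K_i$ from $K_j$, so it meets neither $K_i$ nor $K_j$, and likewise it lies on $h_{k\ell}$ and hence misses $K_k$ and $K_\ell$. This completes the proof.

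The main obstacle I anticipate is making the rotation argument rigorous — specifically, showing that one can continuously deform a transversal of $\K\setminus\K_v$ while keeping it a transversal, as its direction moves through $\overline{f}$ toward $v$, and that the limiting line is well defined. The subtlety is that the \emph{set} of transversals of $\K\setminus\K_v$ with a given direction is a convex region in the plane orthogonal to that direction (an intersection of the ``shadows'' of the compact convex sets), and one needs these regions to vary upper-semicontinuously and stay nonempty up to the boundary direction $v$; compactness of the sets in $\K$ is exactly what is needed here, and it is the reason the reduction to compact sets was made at the start of Section~\ref{Sec:Prelim}. I would expect the authors' proof to package this more slickly, perhaps by working directly with a single transversal near $v$ and arguing about the plane $h_{ij}$ it (nearly) fails to cross, rather than by an explicit limiting process.
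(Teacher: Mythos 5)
Your argument for the easy half (that $\lambda_v$ misses the four sets of $\K_v$) is fine and matches the paper, but the main route you propose for the hard half has two genuine gaps. First, the rotation/limiting step is unjustified: from the existence of one transversal whose direction lies somewhere in a face $f$ incident to $v$ you cannot conclude that there are transversals of $\K\setminus\K_v$ with directions arbitrarily close to $v$, let alone that the direction can be moved continuously toward $v$ while maintaining transversality. The set of transversal directions inside a permutation face is in general a proper subset of the face and need not even be connected (the paper stresses exactly this point), and the remark that ``the orders with these sets never need to change'' does not help: consistency of an order is not realizability --- the whole reason most faces of $\A(\K)$ are not permutation faces is that an acyclic $\prec_f$ need not be realized by any line. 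Second, even granting a transversal $\mu_0$ of $\K\setminus\K_v$ with direction $v$ (i.e., parallel to both $h_{ij}$ and $h_{k\ell}$), your identification $\mu_0=\lambda_v$ does not work. A set $K_b\in\K\setminus\K_v$ need not lie in a single open halfspace of $h_{ij}$ --- only $K_i$ and $K_j$ are separated by that plane --- so ``the one halfspace containing $K_b$'' is not defined; and if the sets of $\K\setminus\K_v$ really were strictly distributed on both sides of $h_{ij}$, then no line parallel to $h_{ij}$, in particular $\lambda_v\subset h_{ij}$ itself, could stab all of them, so the premise you invoke would contradict the statement being proved. In general there may be many transversals of $\K\setminus\K_v$ with direction $v$ other than $\lambda_v$, so producing some such line is not enough.

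What your argument never exploits --- and what the paper's proof hinges on --- is that \emph{all four} faces incident to $v$ are permutation faces. The paper fixes $K_a\in\K\setminus\K_v$ and takes one transversal from each of the four incident faces; by Lemma~\ref{Thm:ConsecutivePairs} (the pairs $K_i,K_j$ and $K_k,K_\ell$ are consecutive, and $K_a$ has a fixed position relative to them), each such transversal meets $K_a$ inside a different one of the four open dihedral wedges determined by $h_{ij}$ and $h_{k\ell}$. Since $K_a$ is convex and meets all four wedges around $\lambda_v$, it must meet $\lambda_v$. Your route draws on a transversal from only one face, and with only one permutation face incident to $v$ the conclusion of the lemma is false in general; even if you insist on a limiting argument, you would still need the four-face wedge information to pin the limit down to $\lambda_v$, at which point the limit is superfluous.
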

\begin{proof}
By definition, $\lambda_v$ misses every set $K\in \K_v$, because 
it is contained in a plane separating $K$ from another set in 
$\K_v$.  Hence, it suffices to show that $\lambda_v$ is a 
transversal of $\K\setminus\K_v$.

To show this, we fix a set $K_a\in\K\setminus\K_v$ and show that
each of the four dihedral wedges determined by $h_{ij}$ and $h_{k\ell}$
meets $K_a$. The convexity of $K_a$ then implies that $\lambda_v$ 
intersects $K_a$; see Figure \ref{fourq} (left).

\begin{figure}[htb]
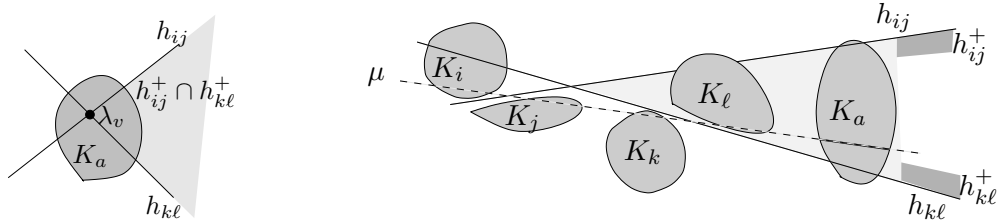

\begin{center}
\input{FourWedges.pstex_t}\hspace{2cm}\input{CrossSide.pstex_t}
\caption{\small\sf
Left: $K_a$ must cross $\lambda_v=h_{ij}\cap h_{k\ell}$ 
since it meets each of the four incident wedges (one of which is
highlighted). 
Right: The transversal line $\mu$ crosses $K_a$ after 
$K_i,K_j,K_k,K_\ell$, so the segment $K_a\cap \mu$ (highlighted) is contained 
in $h_{ij}^+\cap h_{k\ell}^+$.} 
\label{fourq}
\end{center}
\vspace*{-0.8cm}
\end{figure}

Lemma~\ref{Thm:ConsecutivePairs} implies that $K_a$ lies at the 
same position in each of the four permutations induced by the 
faces incident to $v$. Without loss of generality, assume that 
the consecutive pair $K_i,K_j$ appears in these permutations 
before the consecutive pair $K_k,K_\ell$.
Then either $K_a$ precedes both pairs in all four permutations, 
or appears in between them, or succeeds both of them. In what 
follows we assume that $K_a$ succeeds both pairs in all the 
permutations, but similar arguments handle the other two cases too.

Consider the permutation $\pi_1$ induced by the face $f_1$ incident 
to $v$ and lying in $C_{ij}^+\cap C_{k\ell}^+$, and let $\mu$ be a 
line transversal which induces $\pi_1$. Since the direction of
$\mu$ lies in $C_{ij}^+\cap C_{k\ell}^+$, it follows that $\mu$
crosses $h_{ij}$ from the side containing $K_i$ to the side 
containing $K_j$, and it crosses $h_{k\ell}$ from the side 
containing $K_k$ to the side containing $K_\ell$. 
Hence $K_i$ precedes $K_j$ and $K_k$ precedes $K_\ell$ 
in $\pi_1$. Moreover, $\mu$ crosses $h_{ij}$ in between its 
intersections with $K_i$ and $K_j$, and it crosses $h_{k\ell}$ 
in between its intersections with $K_k$ and $K_\ell$. Thus, 
$\mu\cap K_a$ lies in $h_{ij}^+\cap h_{k\ell}^+$; see 
Figure~\ref{fourq} (right). That is, $K_a$ intersects the dihedral wedge 
$h_{ij}^+\cap h_{k\ell}^+$. Fully symmetric 
arguments, applied to the permutations induced by the three 
other faces $f_2,f_3,f_4$ incident to $v$, show that $K_a$ intersects each of 
the three other dihedral wedges determined by $h_{ij}$ and 
$h_{k\ell}$, which, as argued above, implies 
that $\lambda_v$ stabs $K_a$. 
As promised, slightly modified variants of this argument 
(with different correspondences between the wedges around 
$\lambda_v$ and the faces around $v$) handle the cases where 
$K_a$ precedes both pairs $K_i,K_j$ and $K_k,K_\ell$ in all 
four permutations, or appears in between these pairs.
\end{proof}

\begin{theorem}\label{Thm:NZero}
Let $\K$ be a collection of $n$ pairwise disjoint compact convex sets 
in $\reals^3$. Then the number of popular vertices in $\A(\K)$ is 
$O(n^{3})$. 
\end{theorem}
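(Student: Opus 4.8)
The plan is to exploit Lemma~\ref{Thm:PopularLine} to reduce the counting of popular vertices in $\reals^3$ to counting geometric permutations in $\reals^2$, inside the separating planes, and then to apply Clarkson--Shor to control the sum. Fix a regular popular vertex $v$, incident to $C_{ij},C_{k\ell}$. By Lemma~\ref{Thm:PopularLine}, the line $\lambda_v = h_{ij}\cap h_{k\ell}$ is a transversal of the $n-4$ sets in $\K\setminus\K_v$. Now restrict attention to one of the two planes, say $h_{ij}$: the line $\lambda_v$ lies in $h_{ij}$, and the $n-4$ sets of $\K\setminus\K_v$, intersected with $h_{ij}$ (more precisely, projected, or cut by $h_{ij}$ if they cross it — I will have to be careful here, since a convex body need not lie in $h_{ij}$), give a family of convex regions in the $2$-plane $h_{ij}$ that $\lambda_v$ stabs. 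The key point is that $\lambda_v$, as a transversal inside $h_{ij}$ of these planar convex sets, induces a geometric permutation of them, and by the planar bound of Edelsbrunner--Sharir~\cite{ES} there are only $O(n)$ such permutations per plane $h_{ij}$. So, modulo the issue of how many distinct vertices $v$ can map to the same planar permutation, each of the $\binom n2$ separating planes accounts for $O(n)$ ``types,'' giving a first crude bound of $O(n^3)$ — but this is too naive, because many vertices can share the same planar permutation in $h_{ij}$, and conversely the correspondence $v \mapsto (h_{ij}, \text{planar permutation})$ need not be injective.

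The correct route is a Clarkson--Shor argument. I would set up a configuration space in which an object is a regular popular vertex $v$ of $\A(\K)$, its defining set is $\K_v$ (four sets), and its "conflict" set is, roughly, the set of $K_q\in\K\setminus\K_v$ whose removal destroys the fact that $v$'s incident faces are all permutation faces — but that is awkward, so instead I would use the cleaner formulation: the planar permutation that $\lambda_v$ induces among $\K\setminus\K_v$ inside $h_{ij}$. Concretely: for a subset $\R\subseteq\K$, say a pair $(h,\sigma)$ is a \emph{witnessed pair} if $h = h_{ij}$ for some $K_i,K_j\in\R$ and $\sigma$ is a geometric permutation of $\R\setminus\{K_i,K_j\}$ realized by a transversal line lying in $h$; by \cite{ES} applied in $h\cong\reals^2$, a fixed $h_{ij}$ supports $O(|\R|)$ such $\sigma$, hence $\R$ has $O(|\R|^3)$ witnessed pairs. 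Each regular popular vertex $v$ of $\A(\K)$, by Lemma~\ref{Thm:PopularLine}, gives rise to such a witnessed pair for $\R=\K$, with the four defining sets $\K_v$ — and a bounded number of vertices $v$ can give rise to the same witnessed pair (here I expect to use Lemma~\ref{Thm:ConsecutivePairs}(ii), that $K_i,K_j$ and $K_k,K_\ell$ are consecutive in the permutation, to recover the identity of $v$ up to $O(1)$ ambiguity from the planar data plus the two circles). Running Clarkson--Shor over random samples of $\K$ (or, in the fixed-size version, over subsets of size $r$) and using that the defining set has size $4$, the number of vertices whose conflict list is empty after sampling is $O(r^3)$ in expectation, which pulls back to $O(n^3)$ popular vertices overall.

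The main obstacle, I expect, is handling the planar reduction rigorously when the sets of $\K\setminus\K_v$ are not contained in $h_{ij}$ but merely crossed by $\lambda_v$. A set $K_a$ stabbed by $\lambda_v$ meets $h_{ij}$ in a convex set $K_a\cap h_{ij}$ (nonempty since $\lambda_v\subset h_{ij}$ and $\lambda_v$ meets $K_a$), and these planar convex sets are pairwise disjoint inside $h_{ij}$ because the $K_a$ are; so $\lambda_v$ is a genuine transversal of the planar family $\{K_a\cap h_{ij}\}$, and \cite{ES} applies verbatim. The more delicate point is the near-injectivity of $v\mapsto(h_{ij},\sigma)$: different popular vertices $v,v'$ could a priori share a separating plane $h_{ij}$ and induce the same planar permutation $\sigma$ of $\K\setminus\K_v=\K\setminus\K_{v'}$, so I must argue they cannot — here I would observe that $v$ is determined by $h_{ij}$ together with the second plane $h_{k\ell}$, and that $h_{k\ell}$ is pinned down (up to $O(1)$ choices) by the requirement, coming from Lemma~\ref{Thm:ConsecutivePairs}(ii) and the proof of Lemma~\ref{Thm:PopularLine}, that $K_k,K_\ell$ be a consecutive pair in $\sigma$ adjacent in a prescribed way to the position that $\K_v$ occupies. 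Making this last counting airtight, and choosing the Clarkson--Shor configuration space so that the defining sets and conflict lists behave correctly, is the crux; the rest is the standard probabilistic bookkeeping.
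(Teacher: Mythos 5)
Your overall strategy is the one the paper uses: drop into the separating plane $h_{ij}$ via Lemma~\ref{Thm:PopularLine}, pass to the planar cross-sections $K_q\cap h_{ij}$ (which, as you note, are pairwise disjoint, compact and convex, and are stabbed by $\lambda_v$ except for $K_k^*,K_\ell^*$), and combine the Edelsbrunner--Sharir linear planar bound with Clarkson--Shor to get $O(n)$ per plane and $O(n^3)$ overall. The multiplicity issue you worry about is actually harmless: the ground set of the planar permutation realized by $\lambda_v$ already determines $\{K_k,K_\ell\}$, hence $C_{k\ell}$ and $v$, up to $O(1)$ choices. The genuine gap is in the Clarkson--Shor step itself. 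Your ``witnessed pair'' $(h,\sigma)$ is not a legitimate configuration for that technique: a geometric permutation of $\R\setminus\{K_i,K_j\}$ depends on the entire sample, it is not determined by a constant number of sets of $\K$, and ``being realized'' is not of the form ``defining sets present, conflict sets absent''; so the bookkeeping you invoke (``defining set of size $4$, pull $O(r^3)$ back to $O(n^3)$'') has no valid configuration space behind it. Moreover, a regular popular vertex does not yield a witnessed pair for $\R=\K$ at all, since $\lambda_v$ misses $K_k$ and $K_\ell$ and therefore realizes only a permutation of $\K\setminus\K_v$, not of $\K\setminus\{K_i,K_j\}$ (which also makes your injectivity argument via ``$K_k,K_\ell$ consecutive in $\sigma$'' moot---they do not occur in $\sigma$). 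The naive repair---counting, per plane, permutations of all subfamilies obtained by deleting two cross-sections---gives $O(n^3)$ per plane and loses the theorem.

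The missing idea, which is exactly the paper's key move, is to replace $\lambda_v$ by a local, constant-description object before applying Clarkson--Shor: translate $\lambda_v$ within $h_{ij}$ until it touches some cross-section $K_a^*$, then rotate it around $K_a^*$ until it touches a second one $K_b^*$. The resulting extremal line $\mu_v$ is tangent to two cross-sections and misses at most two others (necessarily among $K_k^*,K_\ell^*$); it is defined by two sets, has a conflict list of size at most two, and is charged by at most two popular vertices. Clarkson--Shor then applies verbatim, and the zero-conflict count---extremal lines tangent to two cross-sections that meet all sets of a sample---is $O(r)$ because Edelsbrunner--Sharir bound the complexity (not merely the number of permutations) of the planar transversal space. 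This gives $O(n)$ charged lines per plane and $O(n^3)$ popular vertices. Without this perturbation-to-tangency step your argument does not close. (A minor further point: the theorem counts all popular vertices, and the paper shows a popular vertex must in fact be regular; since degenerate vertices number only $O(n^3)$, restricting to regular ones as you do is acceptable, but it should be said.)
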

\begin{proof}
Note first that each popular vertex must be regular. Indeed,
if $v$ is a degenerate popular vertex incident to, say, $C_{ij},C_{ik}\in \C(\K)$, then, arguing as in Lemma \ref{Thm:ConsecutivePairs}, each of the two pairs $K_i,K_j$ and $K_i,K_k$ appears consecutively in each of the four permutations near $v$. 
Let $f$ be one of the four permutation faces incident to $v$, and assume, without loss of generality, that
$K_k\prec_f K_i\prec_f K_j$. Let $g$ be the permutation face neighboring to $f$ and separated from it only by the circle $C_{ik}$.
Then we must have $K_i\prec_g K_k \prec_g K_j$, contradicting the fact that $K_i,K_j$ are consecutive also under $\prec_g$.

Now let $v$ be a regular popular vertex in $\A(\K)$, incident to $C_{ij},C_{k\ell}\in \C(\K)$,   
and let $\lambda_v=h_{ij}\cap h_{k\ell}$ be the line considered
in Lemma~\ref{Thm:PopularLine}. 
Put $K^*_q = K_q\cap h_{ij}$, 
for each index $q\ne i,j$, and denote by $\K^*$ the collection 
of these $n-2$ planar cross-sections within $h_{ij}$.  Clearly, 
all sets in $\K^*$ are pairwise disjoint, compact, and convex.

\begin{figure}[htb]
\begin{center}
\input{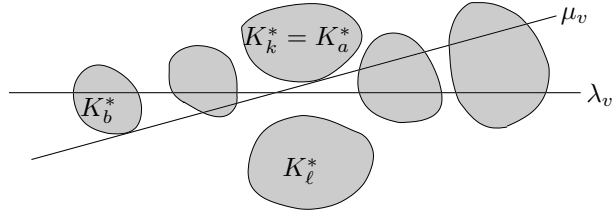}
\caption{\small\sf
View inside $h_{ij}$: The line $\lambda_v=h_{ij}\cap h_{k\ell}$ 
misses $K_k^*,K_\ell^*$ but stabs all other sets in $\K^*$. The
line $\mu_v$ is tangent to $K_a^*=K^*_k$ and to $K_b^*$, so it misses only $K_\ell^*$.} 
\label{Fig:AlmostTransversal}
\end{center}
\vspace*{-0.6cm}
\end{figure}

By Lemma~\ref{Thm:PopularLine}, $\lambda_v$ lies in $h_{ij}$, stabs all 
the sets in $\K^*\setminus\{K^*_k,K^*_\ell\}$ (so they are all
nonempty) and misses the two sets $K_k^*,K_\ell^*$. (As can be 
easily verified, both of $K_k^*,K_\ell^*$ are also nonempty, 
although our analysis does not rely on this property.)

Translate $\lambda_v$ within $h_{ij}$ until it becomes tangent to 
some set $K_a^*\in \K^*$, and then rotate the resulting line 
around $K_a^*$, say counterclockwise,
keeping it tangent to that set, until it becomes 
tangent to another set $K_b^*\in \K^*\setminus\{K_a^*\}$. 
The sets $K_k^*,K_\ell^*,K_a^*,K_b^*$ need not all be distinct, so 
the resulting extremal tangent $\mu_v$ misses {\em at most} two sets
of $\K^*$ and intersects all the other sets; see
Figure~\ref{Fig:AlmostTransversal}.

We charge $\lambda_v$ to $\mu_v$, and argue that each extremal line 
$\mu$ in $h_{ij}$, which is tangent to two sets of $\K^*$ and misses 
at most two other sets of $\K^*$, is charged in this manner at most 
twice.  Indeed, by the general position assumption, $\mu$ lies in 
a single plane $h_{ij}$. Within that plane, if $\mu$ misses two 
sets of $\K^*$ then these must be the sets $K_k^*,K_\ell^*$. If $\mu$ 
misses only one set of $\K^*$ then this set must be one of the sets 
$K_k^*,K_\ell^*$, and the other set is one of the two sets $\mu$ is
tangent to. Finally, if $\mu$ does not miss any set of $\K^*$ then 
$K_k^*,K_\ell^*$ are the two sets $\mu$ is tangent to. Hence $\mu$ 
determines at most two quadruples $K_i,K_j,K_k,K_\ell$ whose lines 
$\lambda_v$ can charge $\mu$, and the claim follows.

It therefore suffices to bound the number of extremal lines $\mu$ 
charged in this manner.
This can be done using the Clarkson-Shor technique \cite{CS}, by 
observing that each such line $\mu$ is defined by two sets of $\K^*$ 
(those it is tangent to; any such pair of sets determine four
common tangents) and is ``in conflict'' with at most two other 
sets (those that it misses). Thus, the Clarkson-Shor technique
implies that the number of lines $\mu_v$ is $O\left(L_0(n/2)\right)$,
where $L_0(r)$ is the (expected) number of extremal lines which are
transversals to a (random) sample of $r$ sets of $\K^*$.
Edelsbrunner and Sharir \cite{ES} establish an upper bound of 
$O(r)$ on the complexity of the space of line transversals to a 
collection of $r$ pairwise-disjoint compact convex sets in the plane, 
implying that $L_0(r)=O(r)$.  Hence the number of charged lines $\mu$ 
in a single plane $h_{ij}$ is $O(n)$, for a total of 
$O\left({n\choose 2}\cdot n\right)=O(n^3)$. Since, as noted above, 
each line is charged at most twice in its plane, this also bounds the number of
popular vertices.
\end{proof}

\subsection{The number of popular edges}
\label{Subsec:PopularEdges}

We next bound the number of popular edges in $\A(\K)$, using the bound
on popular vertices just derived.  We define an \emph{edge border} 
in $\A(\K)$ to be a pair $(v,Q)$, where $v$ is a vertex of $\A(\K)$, 
incident to two great circles $C_{ij},C_{k\ell}$, and $Q$ is one of 
the four open hemispheres $C_{ij}^-,C_{ij}^+,C_{k\ell}^-,C_{k\ell}^+$ 
determined by one of these circles. See Figure \ref{Fig:EdgeBorder} (left).  Note that $Q$ determines a 
unique edge $e$ of $\A(\K)$ which is incident to $v$ and is contained 
in $Q$.  If, in addition, $e$ is a popular edge, we say that $(v,Q)$ 
is a {\em popular edge border}. For the purpose of the analysis, we
will also refer to $(v,Q)$ as a {\em $0$-level edge border}.

\begin{figure}[htb]
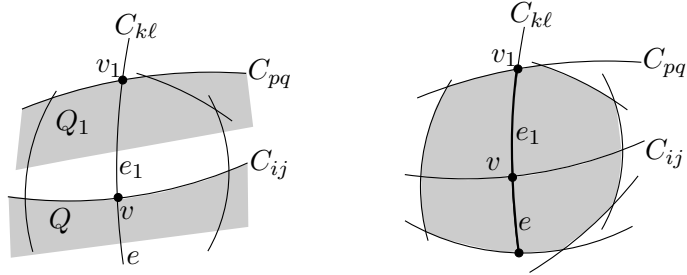

\begin{center}
\input{EdgeBorder.pstex_t}\hspace{2cm}\input{TwoEdges.pstex_t}
\caption{\small\sf
Left: Charging a $0$-level edge border $(v,Q)$ to a $1$-level edge 
border $(v_1,Q_1)$. Right: If the edges $e,e_1$ are both popular then 
$v$ is a popular vertex.} 
\label{Fig:EdgeBorder}
\end{center}
\vspace*{-0.8cm}
\end{figure}

One useful 
feature of the border notation is that if $(v,Q)$ is an edge border 
in $\A(\K)$ and $\K'$ is a subcollection of $\K$ so that $v$ 
is still a vertex of $\A(\K')$, then $(v,Q)$ is also an edge border
in $\A(\K')$. The edge $e'$ of $\A(\K')$ associated
with $(v,Q)$ in $\A(\K')$ either is equal to $e$, or strictly contains 
$e$ (in the latter case both $e$ and $e'$ have $v$ as a common
endpoint).

If an edge border $(v,Q)$, which is not a $0$-level edge border, becomes 
a $0$-level edge border after removing from $\K$ some single set $K_a\in \K$, we call it a 
{\em $1$-level edge border}. In this case we say that $(v,Q)$ is 
\emph{in conflict} with $K_a$.  Note that the set $K_a$, whose 
removal makes $(v,Q)$ a $0$-level edge border, need not be unique; 
see Section \ref{Subsec:PermutFaces} for further discussion. 

Clearly, to bound the number of popular edges it suffices to bound 
the number of $0$-level edge borders, which is twice the number of 
popular edges in $\A(\K)$ (each edge is counted once at each of its
endpoints).

Since each vertex of $\A(\K)$ participates in exactly 
four edge borders, the number of edge borders which are incident to a 
degenerate vertex is $O(n^3)$.  We bound the number of remaining 
$0$-level edge borders using the following charging scheme. 

Let $(v,Q)$ be a $0$-level edge border, where $v$ is incident to 
$C_{ij}$ and $C_{k\ell}$, so that $Q=C_{ij}^+$, say. Let 
$e$ be the popular edge associated with $(v,Q)$. Trace $C_{k\ell}$ 
from $v$ away from $e$ (into $C_{ij}^-$), and let $v_1$ be the next 
encountered vertex. Let $e_1$ be the edge connecting $v$ and $v_1$. 
Let $C_{pq}$ be the other circle incident to $v_1$ and assume, without
loss of generality, that 
$v$ lies in $C_{pq}^+$. See Figure \ref{Fig:EdgeBorder} (left). 
Note that, assuming $|\K|\ge 3$, we have
$C_{pq}\neq C_{ij}$ (i.e., $v_1$ is not antipodal to $v$), because 
otherwise $C_{ij}$ would have intersected only $C_{k\ell}$.
One of the following cases must arise:

\smallskip
(i) 
$v_1$ is degenerate.

\smallskip
(ii)
The edge $e_1$ is also popular, so $v$ is a popular vertex;
see Figure \ref{Fig:EdgeBorder} (right).

\smallskip
(iii) 
$e_1$ is not popular. Since $C_{pq}\neq C_{ij}$, one of $i,j$, say $i$, 
is different from both $p$ and $q$.  This (and the fact that 
$i\ne k,\ell$) implies that removing $K_i$ from $\K$ also removes 
$C_{ij}$ from $\A$, keeps $v_1$ intact, and makes the appropriate 
extension of $e$ reach (and terminate at) $v_1$, thereby making 
$(v_1,Q_1)$ a $0$-level edge border in $\A(\K\setminus\{K_i\})$, 
where $Q_1 = C_{pq}^+$. See Figure \ref{Fig:EdgeBorder} (left).

\smallskip
In case (i) we charge $(v,Q)$ to $v_1$. The number of degenerate
vertices is $O(n^3)$ and each of them can be charged only $O(1)$ times
in this manner. Hence, the number of
$0$-level edge borders that fall into this subcase is $O(n^3)$.

In case (ii) we can charge $(v,Q)$ to $v$. Since a popular vertex 
participates in exactly four $0$-level edge borders, the number of 
$0$-level edge borders that fall into this subcase is $O(n^3)$, by
Theorem~\ref{Thm:NZero}. 

In case (iii) we charge $(v,Q)$ to the $1$-level edge border 
$(v_1,Q_1)$. Note that $(v_1,Q_1)$ is charged in this manner
only by $(v,Q)$.

Let us denote by $E_0(\K)$ (resp., $E_1(\K)$) the number of 
$0$-level edge borders (resp., $1$-level edge borders) in $\A(\K)$. 
Then we have the following recurrence:
\begin{equation}\label{Eq:ChargeLevel1}
E_0(\K) \le E_1(\K) + O(n^3) .
\end{equation}
To solve this recurrence, we apply the technique of Tagansky \cite{Tagansky}. Specifically, we remove from $\K$ a randomly chosen set $K\in \K$, and denote by
$\R$ the collection of the $n-1$ remaining sets.  A $0$-level edge 
border $(v,Q)$ in $\A(\K)$, where $v$ is an intersection point of $C_{ij}$ and $C_{k\ell}$ and is 
regular, appears as a $0$-level edge border in $\A(\R)$ if 
and only if $K$ is different from each of the four sets 
$K_i,K_j,K_k,K_\ell$ defining $v$, which happens with probability 
$\frac{n-4}{n}$. A $1$-level edge border $(v,Q)$ in $\A(\K)$ 
becomes a $0$-level edge border in $\A(\R)$ if and only if $K$ is in 
conflict with $(v,Q)$, which happens with probability \emph{at least} 
$\frac{1}{n}$. No other edge border in $\A(\K)$ can
appear as a $0$-level edge border in $\A(\R)$. Hence, we obtain
\begin{equation}\label{Eq:RemoveOneSet}
{\bf E}\left\{E_0(\R)\right\} \geq 
\frac{n-4}{n}E_0(\K) + \frac{1}{n}E_1(\K) ,
\end{equation}
where ${\bf E}$ denotes expectation with respect to the random sample 
$\R$, as constructed above.
Combining (\ref{Eq:ChargeLevel1}) and (\ref{Eq:RemoveOneSet}) yields
$$
\frac{1}{n} E_0(\K) \le \frac{1}{n} E_1(\K) + O(n^2) \le
{\bf E}\left\{E_0(\R)\right\} -
\frac{n-4}{n}E_0(\K) + O(n^2) .
$$
Denoting by $E_0(n)$ the maximum number of $0$-level edge borders
in $\A(\K)$, for any collection $\K$ of size $n$ with the assumed properties, we get the recurrence
$$ 
\frac{n-3}{n} E_0(n) \le E_0(n-1) + O(n^2) ,
$$
whose solution is easily seen to be $E_0(n) = O(n^3\log n)$ (see, e.g., \cite[Proposition 3.1]{Tagansky}).

\subsection{The number of permutation faces}
\label{Subsec:PermutFaces}

Finally, we bound the number of vertices of permutation faces 
using the bound on popular edges just derived. This will also serve as
an upper bound on the number of permutation faces, and thus also on
$g_3(n)$.  We present the analysis in two stages. The first stage
derives the slightly weaker upper bound $O(n^3\log^2n)$, but is 
considerably simpler. The second stage involves a more careful
examination of the possible charging scenarios, and leads to a sharper
recurrence, whose soution is only $O(n^3\log n)$.

Each vertex $v$ is incident to exactly four faces of $\A(\K)$,
so we need to count $v$ with multiplicity of at most $4$---once for 
each permutation face incident to $v$. 
For this we extend the notion
of borders as follows.  The two great circles passing through $v$ 
partition $\SS^2$ into four wedges, or rather slices.  
Each such slice $R$ contains 
a unique face $f$ incident to $v$, and defines, together with $v$, a 
{\em border} $(v,R)$. 
We call $f$ the face \textit{associated} with $(v,R)$.
Similarly to the notation involving edge
borders in Section \ref{Subsec:PopularEdges}, we call $(v,R)$
a {\em popular border}, or a \emph{$0$-level border}, if the face 
associated with $(v,R)$ is a permutation face.  It thus 
suffices to bound the number of $0$-level borders in $\A(\K)$.

If $(v,R)$ is a border in $\A(\K)$ with an associated face $f$, and $\K'$ is a subcollection 
of $\K$, so that $v$ is still a vertex of $\A(\K')$, then $(v,R)$ 
is also a border in $\A(\K')$, except that the face $f'$ of 
$\A(\K')$ associated with $(v,R)$ may be different from $f$ (or, more precisely, properly contain $f$).

If a border $(v,R)$, which is not a $0$-level border in $\A(\K)$, 
becomes a $0$-level border after removing from $\K$ some set $K$, 
we call it a {\em $1$-level border}. The set $K$ is said to be 
\emph{in conflict} with $(v,R)$. Note that $K$ cannot be one 
of the (at most) four sets defining $v$, and that a $1$-level border 
may be in conflict with more than one set of $\K$. 
See Figure \ref{Fig:VertexBorder} (left). 

\begin{figure}[htb]
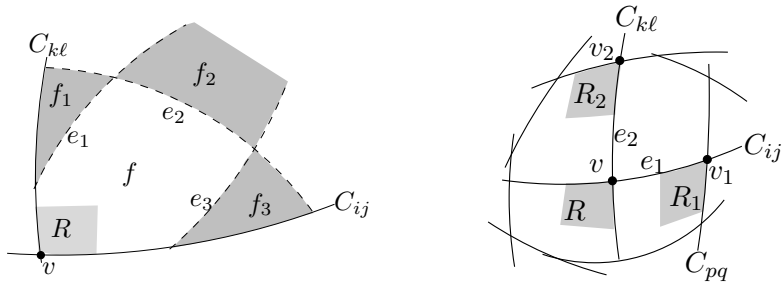

\begin{center}
\input{Weight.pstex_t}\hspace{2cm}\input{VertexBorder.pstex_t}
\caption{\small\sf
Left: A non-permutation face $f$, associated with the $1$-level border $(v,R)$, is separated from permutation faces
$f_1,f_2,f_3$ by the respective edges $e_1\subset
C_{p_1q_1},e_2\subset C_{p_2q_2},e_3\subset C_{p_3q_3}$. If none of
$p_1,q_1,p_2,q_2,p_3,q_3$ belongs to $\{i,j,k,\ell\}$ then 
$(v,R)$ is a 1-level border in conflict with each of 
$K_{p_1},K_{q_1},K_{p_2},K_{q_2},K_{p_3},K_{q_3}$. Right: Charging a $0$-level border $(v,R)$ to the two $1$-level borders
$(v_1,R_1)$, $(v_2,R_2)$, along the two edges $e_1,e_2$ emanating 
from $v$ away from $R$.}
\label{Fig:VertexBorder}
\end{center}
\vspace*{-0.6cm}
\end{figure}

We bound the number of $0$-level borders using a charging scheme 
similar to that in Section~\ref{Subsec:PopularEdges}.
Let $(v,R)$ be a $0$-level border, and let $f$ be the permutation 
face associated with it.  Note that the number of borders incident 
to degenerate vertices is $O(n^3)$.  We may therefore assume that 
$v$ is regular, and let $C_{ij}$ and $C_{k\ell}$ denote the two 
great circles incident to $v$ (so $i,j,k,\ell$ are all distinct). 
Without loss of generality, assume that $R=C_{ij}^+\cap C_{k\ell}^+$.


Let $e_1$ and $e_2$ be the two edges incident to $v$ and emanating
from it away from $R$, where $e_1\subset C_{ij}\cap C_{k\ell}^-$ 
and $e_2\subset C_{k\ell}\cap C_{ij}^-$; 
see Figure \ref{Fig:VertexBorder} (right). Let $v_1$ (resp., $v_2$) be the 
other endpoint of $e_1$ (resp., of $e_2$).  

Our charging scheme is 
based on the following case analysis:

(i) If one of the two edges incident to $v$ and bounding $R$ is 
popular, we charge $(v,R)$ to this edge. Since the number of 
popular edges is $O(n^3\log n)$, and each of them is charged by 
at most four $0$-level borders (twice for each of its endpoints), the 
number of $0$-level borders that fall into this subcase is also 
$O(n^3\log n)$.

(ii) If no edge incident to $v$ and bounding $R$ is popular, we charge 
$(v,R)$ to two $1$-level borders, one incident to $v_1$ and one to 
$v_2$.  Specifically, consider $v_1$, say, and let $C_{pq}$ be the 
circle whose intersection with $C_{ij}$ forms $v_1$, and assume, 
again without loss of generality, that $v$ lies in $C_{pq}^+$. 
We then charge $(v,R)$ to $(v_1,R_1)$, where 
$R_1=C_{pq}^+\cap C_{ij}^+$.  Let $f_1$ be the face of $\A(\K)$ 
associated with $(v_1,R_1)$ (this is the face whose
boundary we trace from $v$ to $v_1$ along $e_1$, and it is 
also incident to $v$).  Since the edge incident to $f,f_1$ 
(and to $v$) is not popular, $f_1$ is not a permutation face.
Clearly, one of the indices $k,\ell$, say $k$, is different from 
both $p,q$. Thus, removing $K_k$ keeps $v_1$ as a vertex in the 
new spherical arrangement, and makes $C_{k\ell}$ disappear, 
so both faces $f,f_1$ fuse into a single larger 
permutation face contained in $R_1$.  Hence, $(v_1,R_1)$ is a 
$1$-level border which is in conflict with $K_k$.
A fully symmmetic argument applies to $v_2$. 
We say that the $1$-level borders $(v_1,R_1)$ and $(v_2,R_2)$, which we charge,
are the \emph{neighbors} of $(v,R)$ in $\A(\K)$.

Note that each $1$-level border $(v',R')$ is charged by at most two 
$0$-level borders in this manner (at most once along each of the two 
edges incident to $v'$ and bounding the face associated with the
border). 

Let $V_0(\K)$ and $V_1(\K)$ denote, respectively, the number of 
$0$-level borders and the number of $1$-level borders in $\A(\K)$
(where we also include degenerate vertices in both counts).
Then we have the following recurrence:
\begin{equation}\label{Eq:VertNaive}
V_0(\K)\leq V_1(\K)+O(n^3\log n) .
\end{equation}
Indeed, each $0$-level border which falls into case (ii) charges
two $1$-level borders, and each $1$-level border is charged at 
most twice. The number of all other $0$-level borders is 
$O(n^3\log n)$, as argued above.
Combining this inequality with the random sampling technique of
Tagansky~\cite{Tagansky}, as in Section~\ref{Subsec:PopularEdges}, 
results in the recurrence
$$
\frac{n-3}{n} V_0(n) \le V_0(n-1) + O(n^2\log n) ,
$$
where $V_0(n)$ is the maximum value of $V_0(\K)$, over all collections
$\K$ of $n$ pairwise disjoint compact convex sets in $\reals^3$.
The solution of this recurrence is $V_0(n) = O(n^3\log^2n)$, which
yields the same upper bound on the number of geometric permutations 
induced by $\K$.

\medskip
\noindent{\bf An improved bound.}
We next improve the bound by replacing the recurrence
(\ref{Eq:VertNaive}) by a refined recurrence. 
Let $(v,R)$ be a $1$-level border which is in conflict with $w\geq 1$ sets 
of $\K$. Then $(v,R)$ becomes a $0$-level border in $\A(K\setminus\{K\})$, 
after removing a random set $K\in \K$, with probability exactly 
$\frac{w}{n}$. Namely, this happens if and only if 
$K$ is one of the $w$ sets in conflict with $(v,R)$. We refer to 
$w$ as the \emph{weight} of $(v,R)$.

In the refined setting, $V_1(\K)$ counts the total weight of all 
the $1$-level borders in $\A(\K)$, so now the contribution of each 
$1$-level border to $V_1(\K)$ is equal to its weight.
By an appropriate adaptation of the argument 
in Section~\ref{Subsec:PopularEdges}, we obtain the following 
{\em equality}:
\begin{equation}\label{Eq:VertexSample}
{\bf E}\{V_0(\R)\}=\frac{n-4}{n}V_0(\K)+\frac{1}{n}V_1(\K),
\end{equation}
where $\R$ denotes a random sample of $n-1$ sets of $\K$.
This follows by noting that the probability of a $1$-level border of
weight $w$ to be counted in $V_0(\R)$ is $\frac{w}{n}$, and it
contributes $w$ to $V_1(\K)$.


In the refined charging scheme, each $1$-level border $(v,R)$ of weight $w\ge 1$ 
gets a supply of $w$ units of charge, which it can give to its 
charging neighboring $0$-level borders. Hence, as long as the number of 
these charging $0$-level borders, which is at most two, does not exceed 
$w$, $(v,R)$ can pay each of its neighbors 1 unit.
Hence, the only problematic case is when $w=1$ and $(v,R)$ is 
charged twice. The following technical lemma takes care of this
case.
\begin{lemma}\label{Thm:Refine}
The number of $1$-level borders having weight $1$ and charged by two 
$0$-level borders is $O(n^3\log n)$.
\end{lemma}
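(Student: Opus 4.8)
The plan is to mimic the structure of the charging scheme in Section~\ref{Subsec:PopularEdges}, but now applied to the ``bad'' $1$-level borders isolated in this lemma. Fix a $1$-level border $(v,R)$ of weight $1$ that is charged by two $0$-level borders $(u_1,S_1)$ and $(u_2,S_2)$, along the two edges $e_1,e_2$ incident to $v$ and bounding the face $f$ associated with $(v,R)$. Let $K_a$ be the unique set in conflict with $(v,R)$, so that removing $K_a$ from $\K$ fuses $f$ with a permutation face across one of the circles through $v$. Let $C_{ij},C_{k\ell}$ be the two circles through $v$ (we may assume $v$ regular, since borders at degenerate vertices number only $O(n^3)$), and say $R = C_{ij}^+\cap C_{k\ell}^+$; since $K_a$ is in conflict with $(v,R)$ but is not one of $K_i,K_j,K_k,K_\ell$, and the only circles separating $f$ from its four neighboring faces are $C_{ij},C_{k\ell}$ together with the circles $C_{pq}$ bounding the far sides of $e_1,e_2$, the conflict set $K_a$ must be among $\{K_p,K_q\}$ for one of those far circles. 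Because the weight is exactly $1$, precisely one of the two edges $e_1,e_2$ leads to a face that fuses with $f$ upon deletion of $K_a$; call this distinguished edge $e_1$, with far endpoint $v_1$ lying on the circle $C_{pq}$ with (say) $a\in\{p,q\}$.

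Next I would trace, starting from $v$, along the edge $e_2$ (the one whose far side does \emph{not} fuse under the deletion of $K_a$), to its endpoint $v_2$, and analyze the situation there exactly as in the case analysis of Section~\ref{Subsec:PermutFaces}. One of the following occurs: (i) $v_2$ is degenerate; (ii) the appropriate edge incident to $v_2$ is popular; or (iii) $(v_2,R_2)$, for the natural slice $R_2$ on the far side, is itself a $1$-level border which is charged by $(v,R)$ and admits a further charge forward. In case (i) we charge $(v,R)$ to the degenerate vertex $v_2$ ($O(n^3)$ of those, each charged $O(1)$ times); in case (ii) we charge to the popular edge ($O(n^3\log n)$ of those by the bound of Section~\ref{Subsec:PopularEdges}, each charged $O(1)$ times); and in case (iii) we push the charge forward to $(v_2,R_2)$, noting, crucially, that $(v_2,R_2)$ is reached in this manner by at most a bounded number of bad $1$-level borders $(v,R)$, since $(v_2,R_2)$ together with the edge $e_2$ traced back to $v$ determines $v$ up to $O(1)$ choices. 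The forward-pushing instances thus define a recurrence of exactly the same shape as (\ref{Eq:ChargeLevel1})--(\ref{Eq:RemoveOneSet}): if $F(\K)$ is the number of bad $1$-level borders (weight $1$, charged twice) and $F_1(\K)$ the number of the associated ``next-level'' borders, then $F(\K)\le F_1(\K)+O(n^3\log n)$, and a random-deletion argument à la Tagansky, identical to the one used for $E_0$, gives $\frac{n-3}{n}F(n)\le F(n-1)+O(n^2\log n)$, whose solution is $F(n)=O(n^3\log n)$, possibly absorbing one extra logarithmic factor which is harmless here since we only need an $O(n^3\log n)$ bound with slack.

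The main obstacle I anticipate is \textbf{ensuring that the ``next-level'' border $(v_2,R_2)$ is not itself bad}, or more precisely, that the forward chain of charges has a bounded-depth structure that does not re-introduce the very phenomenon we are trying to bound. The weight-$1$ hypothesis is what makes the charging along $e_1$ ``forced'' (the fusion under $K_a$ happens on exactly one side), so one must verify that when we push forward along $e_2$, either the target edge is popular, or the target border has strictly larger weight, or the deletion that would make it $0$-level is witnessed by a set that cannot coincide with the earlier conflict set $K_a$ --- otherwise a naive telescoping could spin in a cycle. I would resolve this by observing that the set whose removal fuses the face across $e_2$ must be one of $\{K_i,K_j\}$ or one of the two sets on the far circle through $v_2$, and in any case is distinct from $K_a$ (which lives on the $e_1$-side circle $C_{pq}$ and is not among $K_i,K_j,K_k,K_\ell$), so each step of the chain deletes a genuinely new set; since $\K$ is finite the chain terminates, and termination is into cases (i) or (ii), both already bounded by $O(n^3\log n)$. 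The remaining bookkeeping --- checking that $(v_2,R_2)$ is charged $O(1)$ times and that the Clarkson--Shor-style deletion probabilities are bounded below by $1/n$ --- is routine and parallels Section~\ref{Subsec:PopularEdges} verbatim.
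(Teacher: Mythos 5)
Your proposal diverges from the paper's argument and has genuine gaps. First, the structural step is off: the heart of the paper's proof is that the weight-$1$ hypothesis \emph{forces} the two ``far'' circles (the second circles through the charging vertices $v_1$ and $v_2$) to be of the form $C_{kq}$ and $C_{iq}$ --- each pairing one index of $\{i,j,k,\ell\}$ with the \emph{same} extra index $q$, which is the unique conflict set. In particular your claim that ``precisely one of the two edges leads to a face that fuses with $f$ upon deletion of $K_a$'' is false: deleting $K_q$ removes both far circles and fuses $f$ with both neighboring permutation faces. Missing this forced structure, you never reach the genuinely hard configuration (the face $f$ being at least pentagonal with edges on $C_{ij},C_{kq},C_{iq}$), which the paper handles by a direct charge: either to a $1$-level \emph{edge} border (bounded by $O(n^3\log n)$ via Clarkson--Shor applied to $E_0(n/2)$), or to a triple $(f_0,e_0,e_1)$ in the arrangement $\A_i$ of the circles involving index $i$, where a $K_{3,3}$-planarity argument (Lemma \ref{nok33}) bounds the multiplicity and the $O(n^2)$ bound on the sum of squared face complexities (Lemma \ref{zonesq}) bounds the number of triples by $O(n^3)$. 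Nothing in your scheme substitutes for these two ingredients.

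Second, your proposed Tagansky-style recurrence for $F(\K)$, the number of ``bad'' borders, is unsound. The property ``weight exactly $1$ and charged twice'' is not stable under removal of a random set: with probability $1/n$ the removed set is the conflict set and $(v,R)$ becomes a $0$-level border (so it is not counted in $F(\R)$); the weight can also grow above $1$, and the two charging $0$-level borders may cease to charge $(v,R)$ (e.g., an edge at their vertices may become popular after the removal). So the survival inequality ${\bf E}\{F(\R)\}\geq \frac{n-c}{n}F(\K)+\frac1n F_1(\K)$ that your recurrence needs is unjustified, and $F_1$ is never actually defined. Finally, even taking your recurrence $\frac{n-3}{n}F(n)\le F(n-1)+O(n^2\log n)$ at face value, its solution is $O(n^3\log^2 n)$, not $O(n^3\log n)$, and that extra logarithm is \emph{not} harmless: Lemma \ref{Thm:Refine} enters the refined inequality $2V_0(\K)\le V_1(\K)+O(n^3\log n)$ precisely as the additive term, so an $O(n^3\log^2 n)$ bound would only reproduce the $O(n^3\log^2 n)$ bound of the naive scheme and the entire refinement would be lost.
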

Before proving Lemma~\ref{Thm:Refine}, we show how to use it 
to replace \ref{Eq:VertNaive} by a better recurrence, and thereby establish an improved bound on the number of geometric 
permutations in $\reals^3$. 

If a $1$-level border $(v,R)$ has only one neighboring $0$-level 
border $(v',R')$ then $(v',R')$ can receive one unit of charge from 
$(v,R)$, regardless of what the weight of $(v,R)$ is.  Similarly, 
if $(v,R)$ has weight at least $2$, and it has two neighboring 
$0$-level borders, each of these $0$-level borders can receive one 
unit of charge from $(v,R)$.  The number of remaining $1$-level 
borders, namely the $1$-level borders of weight $1$ with two neighboring 
$0$-level borders, is $O(n^3\log n)$, by Lemma~\ref{Thm:Refine}.

To recap, each $0$-level border, except possibly for $O(n^3\log n)$ ones, 
receives $1$ unit of charge from each of its two neighboring 
$1$-level borders. Moreover, the number of charges made to each of 
the remaining $1$-level borders, by its neighboring $0$-level borders,
does not exceed its weight.
Thus, we can replace (\ref{Eq:VertNaive}) by the following inequality:
$$
2V_0(\K)\leq V_1(\K)+O(n^3\log n).
$$
Combining this with (\ref{Eq:VertexSample})
 we get
$$
\frac{2}{n}V_0(\K) \le \frac{1}{n}V_1(\K) 
 + O(n^2\log n) \le
{\bf E}\left\{V_0(\R)\right\} - \frac{n-4}{n}V_0(\K) + O(n^2\log n) ,
$$
or
$$
\frac{n-2}{n}V_0(\K) \le 
{\bf E}\left\{V_0(\R)\right\} + O(n^2\log n) .
$$
Replacing $V_0(\K)$, $V_0(\R)$ by their respective maximum values
$V_0(n)$, $V_0(n-1)$, we thus obtain the recurrence
$$
\frac{n-2}{n}V_0(n) \le V_0(n-1) + O(n^2\log n) ,
$$
whose solution is easily seen to be $V_0(n) = O(n^3\log n)$ (again, see \cite[Proposition 3.1]{Tagansky}).

As mentioned earlier, $V_0(n)$ serves as an upper bound on the 
number of geometric permutations induced by $\K$.
We thus conclude with the following main result of this section.
\begin{theorem}\label{Thm:Main3D}
Any collection $\K$ of $n$ pairwise disjoint convex sets in 
$\reals^3$ admits at most $O(n^3\log n)$ geometric permutations.
\end{theorem}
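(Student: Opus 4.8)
The plan is to bound the number of geometric permutations by a quantity that has already been brought under control, namely the number of $0$-level borders $V_0(\K)$ of the spherical arrangement $\A(\K)$, and then to run the recurrence chain built up in Sections~\ref{subsec:pop3D}--\ref{Subsec:PermutFaces}.

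First I would invoke the two reductions from Section~\ref{Sec:Prelim}. One may assume the sets of $\K$ are compact, since replacing each $K_i$ by the convex hull of its contact points with the finitely many realizing transversals yields a compact convex subset of $K_i$ and destroys no geometric permutation. And the number of geometric permutations is at most the number of permutation $2$-faces of $\A(\K)$, because every oriented transversal line has its direction in the interior of a single $2$-face, and all transversal directions lying in a common face induce the same order on $\K$. Next I would pass from faces to vertices: assuming $|\K|\ge 3$, the arrangement has at least three great circles, so the boundary of every cell of $\A(\K)$ contains at least three vertices, while every vertex lies on exactly four faces; hence the number of permutation faces is $O$ of the number of (vertex, incident permutation face) incidences, which, after absorbing the $O(n^3)$ incidences at degenerate vertices, is $O(V_0(\K))$ in the notation of Section~\ref{Subsec:PermutFaces}. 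It therefore suffices to prove $V_0(n)=O(n^3\log n)$.

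Finally I would run the three-stage counting argument. Theorem~\ref{Thm:NZero} supplies $O(n^3)$ popular vertices. Feeding this into the $0$/$1$-level \emph{edge}-border charging scheme of Section~\ref{Subsec:PopularEdges}, together with one application of Tagansky's random-deletion step, yields the recurrence $\frac{n-3}{n}E_0(n)\le E_0(n-1)+O(n^2)$ and hence $E_0(n)=O(n^3\log n)$ popular-edge borders. Feeding \emph{that} into the \emph{vertex}-border charging scheme of Section~\ref{Subsec:PermutFaces}, and using the refinement of Lemma~\ref{Thm:Refine} to dispose of the weight-$1$ $1$-level borders that get charged twice, I obtain the sharpened inequality $2V_0(\K)\le V_1(\K)+O(n^3\log n)$, which combines with the sampling identity to give $\frac{n-2}{n}V_0(n)\le V_0(n-1)+O(n^2\log n)$, whose solution is $V_0(n)=O(n^3\log n)$. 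Together with the previous paragraph this yields $g_3(n)=O(n^3\log n)$.

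With Theorem~\ref{Thm:NZero}, the edge-border recurrence, and Lemma~\ref{Thm:Refine} in hand, the remaining work is pure bookkeeping, so I do not expect any real obstacle here. The genuine difficulty lies upstream, in Lemma~\ref{Thm:Refine}: without it one is stuck with the cruder bound $V_0(\K)\le V_1(\K)+O(n^3\log n)$, which propagates to $V_0(n)=O(n^3\log^2 n)$. Showing that a weight-$1$ $1$-level border charged by two $0$-level borders can always be ``rescued'' is precisely the step that removes the spurious second logarithm, and is where I would expect the effort to go were that lemma not already available.
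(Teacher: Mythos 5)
Your proposal is correct and follows the paper's own argument essentially verbatim: reduce to compact sets, bound permutations by vertex--face incidences of permutation faces, and chain Theorem~\ref{Thm:NZero}, the edge-border recurrence of Section~\ref{Subsec:PopularEdges}, and the refined vertex-border scheme with Lemma~\ref{Thm:Refine} to get $V_0(n)=O(n^3\log n)$. You also correctly locate the essential difficulty in Lemma~\ref{Thm:Refine}, which is exactly where the paper expends its extra effort to shave the second logarithmic factor.
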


\paragraph{Proof of Lemma \ref{Thm:Refine}.}
Consider a $1$-level border $(v,R)$ of weight $1$, where
$v$ is incident to two great circles $C_{ij}$, $C_{k\ell}$, which 
is charged twice.  We may assume that $v$ is regular (i.e., the 
four indices $i,j,k,\ell$ are distinct), since the number of 
remaining borders is $O(n^3)$.
Let $(v_1,R_1)$ be the $0$-level border that 
charges $(v,R)$ along $C_{ij}$, and let $(v_2,R_2)$ be the 
$0$-level border that charges $(v,R)$ along $C_{k\ell}$.  
By construction, both $v_1$ and $v_2$ are regular 
(otherwise they do not charge $v$).
Let $C_{p_1q_1}$ denote the other circle incident 
to $v_1$, and let $C_{p_2q_2}$ denote the other circle incident 
to $v_2$.  Clearly, each index in $\{p_1,q_1,p_2,q_2\}$
which does not belong to $\{i,j,k,\ell\}$ contributes to the 
weight of $(v,R)$, so, by assumption, there is only one such 
index, call it $q$. Since $v_1$ is regular, neither $p_1$ nor 
$q_1$ belongs to $\{i,j\}$, so (exactly) one of them must belong 
to $\{k,\ell\}$, say $p_1=k$ and then $q_1=q$. Symmetrically, we 
may assume that $p_2=i$, say, and then $q_2=q$. 
Since $v$ is regular and $q\notin \{i,j,k,\ell\}$, the two circles $C_{p_1q_1}$, 
$C_{p_2q_2}$ (i.e., $C_{kq}$, $C_{iq}$) are distinct.
See Figure~\ref{weight1}.

\begin{figure}[htb]
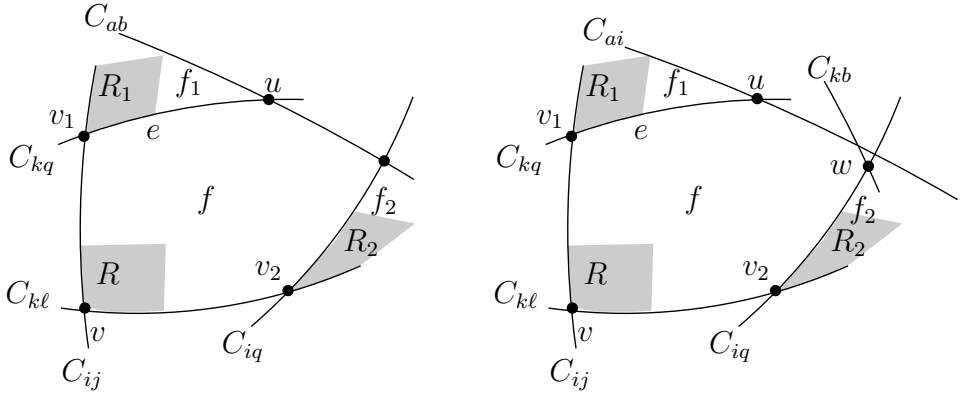

\begin{center}
\input{RefineWeight.pstex_t}\hspace{1cm}\input{RefineWeight2.pstex_t}
\caption{\small\sf
Two scenarios depicting a $1$-level border $(v,R)$ of weight $1$ 
that is charged by two $0$-level borders $(v_1,R_1),(v_2,R_2)$.} 
\label{weight1}
\end{center}
\vspace*{-0.6cm}
\end{figure}

In this special scenario we have two distinct permutation faces 
$f_1$ and $f_2$, where $f_1$ is the face associated with 
$(v_1,R_1)$ and $f_2$ is the face associated with $(v_2,R_2)$.

There are two possible subcases: Assume first that the face $f$
associated with $(v,R)$ is just the quadrangle bounded by
$C_{ij}$, $C_{k\ell}$, $C_{iq}$ and $C_{kq}$. In this case the
fourth vertex of $f$, formed by intersection of $C_{iq}$ and 
$C_{kq}$, is degenerate. Since each degenerate vertex is incident 
to at most four faces, the number of $1$-level borders 
falling into this subcase is $O(n^3)$.

Suppose then that $f$ has additional edges and vertices.
Consider, for example, the vertex $u$ which is the other endpoint
(other than $v_1$) of the edge $e$ of $f$ lying on $C_{kq}$. 
Let $C_{ab}$ denote the other circle incident to $u$. 
Assume with no loss of generality that $v$ lies in the hemisphere 
$C_{ab}^+$. We may also assume that neither $a$ nor $b$ is in 
$\{k,q\}$, for otherwise $u$ is a degenerate vertex, so we can 
argue similarly to the previous subcase.

Suppose first that neither $a$ nor $b$ is equal to $i$. Then
removing $K_i$ keeps $u$ as a vertex of $\A(\K\setminus\{K_i\})$. The
edge $e$ extends at its other end into a longer {\em popular}
edge (it bounds on one side an extension of $f\cup f_2$ and on the other
side an extension of $f_1$, both of which are now permutation faces; 
see Figure~\ref{weight1} (left)), 
so $(u,C_{ab}^+)$ is a $1$-level edge border. We charge the 
1-level border $(v,R)$ to $(u,C_{ab}^+)$. By construction, such an
edge border is charged only once, as is easily checked.

The number of $1$-level edge borders can be bounded using the
Clarkson-Shor analysis technique \cite{CS}, similar to the way it was
used in the proof of Theorem~\ref{Thm:NZero}. That is, since each
$1$-level edge border is defined by at most four sets of $\K$ and becomes
a $0$-level edge border when we remove (at least) one set from $\K$,
the number of $1$-level 
edge borders is $O\left({\bf E}\{E_0(\K')\}\right)$, where $\K'$ 
is a random sample of $n/2$ sets of $\K$. Hence, the analysis in the
preceding subsection implies that the number of $1$-level edge 
borders in $\A(\K)$ is $O(n^3\log n)$, and therefore the same bound
holds for the number of $1$-level borders $(v,R)$ under consideration.

We are therefore left with the situation where, say, $b=i$.
Applying a fully symmetric argument to the edge of $f$ lying on
$C_{iq}$, we conclude that the only problematic case is where $f$ is
at least pentagonal, with five consecutive vertices $u,v_1,v,v_2,w$,
so that $u$ is incident to $C_{ai}$ and $C_{kq}$,
$v_1$ is incident to $C_{kq}$ and $C_{ij}$,
$v$ is incident to $C_{ij}$ and $C_{k\ell}$,
$v_2$ is incident to $C_{k\ell}$ and $C_{iq}$, and
$w$ is incident to $C_{iq}$ and $C_{kb}$; here $a$ and $b$ are two
indices, neither of which belongs to $\{i,j,k,\ell,q\}$;
$a$ and $b$ may be equal. See Figure~\ref{weight1} (right).

Let $\A_i$ be the arrangement of the $n-1$ great circles of the form 
$C_{ir}$ or $C_{ri}$, for $r\neq i$.
Let $f_0$ be the face of $\A_i$ containing $f$.
By assumption, the boundary of $f$ touches three distinct boundary 
edges of $f_0$.  We charge the 1-level border $(v,R)$ to the triple
$(f_0,e_0,e_1)$, where $e_0\subset C_{ij}$ and $e_1\subset C_{iq}$
are the two boundary edges of $f_0$ which contain the respective edges
of $\partial f$. To complete the proof of Lemma \ref{Thm:Refine}, 
we need the following two lemmas. 

\begin{lemma} \label{nok33}
Let $1\leq i\leq n$, and let $\A_i$ be the arrangement of the
$n-1$ great circles $C_{ir}$ or $C_{ri}$, for $r\neq i$. Let $f_0$ 
be a face in $\A_i$, and let $e_0,e_1$ be two edges of $f_0$. Then
there exist at most two faces of $\A$ which are contained in $f_0$ and
are bounded by a portion of $e_0$, by a portion of $e_1$, and by a 
portion of some other edge of $f_0$.
\end{lemma}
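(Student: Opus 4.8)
The plan is to reduce Lemma~\ref{nok33} to an elementary planar fact and then to prove that fact by a short Jordan‑curve argument together with a pigeonhole step.

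\medskip\noindent\emph{Reduction to a planar statement.}
For $n\le 3$ every quantity in sight is $O(1)$, so assume $n\ge 4$; then $\A_i$ is an arrangement of at least three great circles in general position, so each of its faces --- in particular $f_0$ --- lies inside an open hemisphere of $\SS^2$. Projecting that hemisphere gnomonically to the plane turns $f_0$ into a convex polygon $P$, turns every great circle of $\A$ meeting $f_0$ into a line, and --- since inside $f_0$ the circles of $\A_i$ contribute only to $\partial f_0$ --- turns the faces of $\A$ contained in $f_0$ into precisely the cells of an arrangement of lines (``chords'') inside $P$; by general position each such line meets $\partial P$ in the relative interiors of two distinct edges of $P$. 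Edges of $f_0$ become edges of $P$, and ``$F$ is bounded by a portion of the edge $e$'' becomes ``the cell $F$ touches the edge of $P$ corresponding to $e$''. Hence it suffices to prove: \emph{if $P$ is a convex polygon subdivided by chords and $e_0,e_1$ are two distinct edges of $P$, then at most two cells touch $e_0$, touch $e_1$, and touch some third edge of $P$.}

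\medskip\noindent\emph{The key sub-claim and the pigeonhole step.}
The edges $e_0,e_1$ cut $\partial P$ into two arcs $\alpha,\beta$ (one of them a single vertex when $e_0,e_1$ are adjacent). The heart of the argument is the claim that \emph{at most one cell touches $e_0$, $e_1$, and an edge contained in $\alpha$} (and symmetrically for $\beta$). Granting this, the planar statement follows at once: if three distinct cells $F_1,F_2,F_3$ each touched $e_0$, $e_1$, and a third edge of $P$, each of those third edges would lie in $\alpha$ or in $\beta$, so two of the $F_j$'s would have their third edge in the same arc, contradicting the sub-claim.

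\medskip\noindent\emph{Proof of the sub-claim.}
Suppose two distinct cells $F\neq F'$ each touch $e_0$, $e_1$, and edges $a,a'\subseteq\alpha$, respectively. Since $a\subseteq\alpha$, the edges $e_0,a,e_1$ occur in this cyclic order along $\partial P$, with $\beta$ running from $e_1$ back to $e_0$; as $F$ is convex, its contact intervals with $e_0,a,e_1$ occur in the same cyclic order along $\partial F$. The sub-arcs of $\partial F$ interior to $P$ between consecutive contact intervals (the ``bridges'' of $F$) are pairwise disjoint and meet $\partial P$ only at their endpoints, so each connected component of $P\setminus F$ corresponds to exactly one such bridge, its closure meeting $\partial P$ in the single arc of $\partial P$ spanned by that bridge. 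Because $a$ separates $e_0$ from $e_1$ on the $\alpha$-side of $\partial P$, the only arc between consecutive contacts of $F$ that can meet both $e_0$ and $e_1$ is the one running from $e_1$ through $\beta$ to $e_0$; let $W$ be the corresponding component. Thus $W$ is the unique component of $P\setminus F$ whose closure meets both $e_0$ and $e_1$, and $\overline W\cap\alpha=\emptyset$. Now $F'$ is connected, disjoint from $F$, and meets both $e_0$ and $e_1$, so $F'\subseteq W$; but $F'$ also meets $a'\subseteq\alpha$, which is disjoint from $\overline W$ --- a contradiction. (If $F$ contains the vertex common to two consecutive contact edges, or if $e_0,e_1$ are adjacent, the bookkeeping only simplifies: either no component of $P\setminus F$ meets both $e_0$ and $e_1$, which already makes $F'$ impossible, or the component that does still avoids $\alpha$.)

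\medskip\noindent\emph{Main obstacle.}
The one point that needs care is the claim in the previous paragraph that a cell $F$ touching three boundary edges in the fixed cyclic order $e_0,a,e_1$ splits $P$ so that exactly the ``$\beta$-component'' touches both $e_0$ and $e_1$ and that this component avoids $\alpha$. This is evident from a figure, and rigorously it is just the Jordan-curve bookkeeping for a convex polygon $F$ sitting inside the convex polygon $P$ and touching $\partial P$ in three arcs in a prescribed cyclic order; everything else --- the planar reduction, the pigeonhole, and the degenerate adjacencies --- is routine.
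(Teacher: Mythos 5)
Your proposal is correct and shares the paper's skeleton: you split $\partial f_0$ at $e_0,e_1$ into two arcs and show each arc can serve as the ``third side'' for at most one face, which is exactly the paper's pigeonhole. Where you differ is in the proof of that key sub-claim. The paper disposes of it in one stroke: two faces inside $f_0$ touching $e_0$, $e_1$ and the same arc would force a drawing of $K_{3,3}$ in the sphere, an argument that needs no convexity, no projection to the plane, and (as in the cited Ezra--Sharir reference) works for much more general cells. You instead project $f_0$ gnomonically to a convex polygon $P$ and run a direct Jordan-curve/convexity argument: a cell $F$ touching $e_0$, a third edge $a\subseteq\alpha$, and $e_1$ decomposes $P\setminus F$ into pockets, only the $\beta$-side pocket can have closure meeting both $e_0$ and $e_1$, and that pocket's trace on $\partial P$ avoids $\alpha$, so a second such cell, which would have to sit in that pocket yet carry an edge on some $a'\subseteq\alpha$, cannot exist. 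This is more elementary (no appeal to nonplanarity of $K_{3,3}$) at the cost of more bookkeeping, most of which you supply correctly. Two small points should be tightened, though neither is a real gap: ``$F'$ is disjoint from $F$'' is true for the open cell, not the closed one (so place the interior of $F'$ in a component $W$ of $P\setminus F$ and then use $F'\subseteq\overline{W}$), and $\overline{W}$ may contain the endpoints of $\alpha$, so the punchline should be that $\overline{W}\cap\partial P$ contains no nondegenerate subsegment of $\alpha$, which still contradicts $F'$ having an edge on $a'$.
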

\begin{proof}
The edges $e_0$ and $e_1$ partition $\bd f_0$ into up to four connected
portions, $e_0$, $\gamma^-$, $e_1$, $\gamma^+$. We claim that there can
be at most one face $f$ of $\A$ which is contained in $f_0$ and 
which is bounded by a portion of $e_0$, a portion of $e_1$, and a 
portion of $\gamma^+$. A symmetric claim holds if we replace 
$\gamma^+$ by $\gamma^-$, and the lemma follows. The latter claim
follows by observing that the existence of two distinct faces
$f_1,f_2$ of $\A$ contained in $f_0$ and touching $e_0$, $e_1$ and
$\gamma^+$ would lead to an impossible planar drawing of $K_{3,3}$, 
as illustrated in Figure~\ref{k33}. See, e.g., \cite{EzS} for a 
similar argument.
\end{proof}

\begin{figure}[htb]
\begin{center}
\input{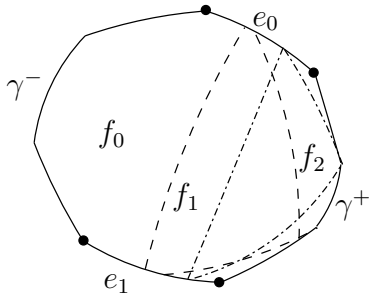}
\caption{\small\sf
A face $f_0$ of $\A_i$ cannot contain two distinct faces 
$f_1,f_2$ of $\A(\K)$ that touch $e_0,e_1$ and $\gamma^+$.}
\label{k33}
\end{center}
\vspace*{-0.6cm}
\end{figure}

\begin{lemma} \label{zonesq}
The number of triples $(f_0,e_0,e_1)$, where $f_0$ is a face in 
$\A_i$, as defined in Lemma \ref{nok33}, and $e_0,e_1$ are two 
edges of $f_0$, summed over all $i$, is $O(n^3)$.
\end{lemma}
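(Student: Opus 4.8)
The plan is to bound, for each fixed index $i$, the number of pairs $(f_0, e_0)$ where $f_0$ is a face of the arrangement $\A_i$ of the $n-1$ great circles through $i$, and $e_0$ is an edge of $f_0$; then summing over $i$ and multiplying by the (constant) number of choices for the second edge $e_1$ per face $f_0$ that can actually arise in Lemma~\ref{nok33} will give the bound. However, a face $f_0$ can have many edges, so I cannot naively charge each pair $(f_0,e_0)$. Instead, the key observation is that the quantity $\sum_{f_0} (\text{number of edges of } f_0)$, over all faces $f_0$ of an arrangement of $m$ great circles on $\SS^2$, is just twice the number of edge–face incidences, which equals $O(m^2)$ by the standard fact that a spherical arrangement of $m$ great circles has $O(m^2)$ vertices, edges, and faces, and each edge bounds exactly two faces. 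Applying this with $m = n-1$ for each of the $n$ arrangements $\A_i$ gives $\sum_i O(n^2) = O(n^3)$ pairs $(f_0, e_0)$ in total.

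The next step is to handle the second edge $e_1$. Here I would invoke Lemma~\ref{nok33}: for a fixed face $f_0$ and a fixed first edge $e_0$, not every choice of $e_1$ yields a valid triple in the sense required (one bounded by portions of $e_0$, $e_1$, and a third edge of $f_0$, with the bounded face $f$ lying inside $f_0$). In fact, I claim the number of edges $e_1$ for which the pair $(e_0,e_1)$ admits such a face $f\subseteq f_0$ is $O(1)$ — indeed, tracing the boundary $\partial f_0$ starting from $e_0$, each such face $f$ must be "anchored" along a contiguous stretch of $\partial f_0$, and a charging argument analogous to the one behind Lemma~\ref{nok33} (the $K_{3,3}$ non-planarity obstruction) limits how the faces $f$ inside $f_0$ can simultaneously touch $e_0$ together with two further boundary edges. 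More simply: it suffices to charge each valid triple $(f_0, e_0, e_1)$ to the inner face $f$ of $\A(\K)$ that it defines; since $f$ is contained in $f_0$ and touches $e_0$, and since by Lemma~\ref{nok33} each ordered pair of boundary edges of $f_0$ is witnessed by at most two such inner faces, each inner face $f$ is charged only $O(1)$ times as we vary $(e_0,e_1)$ over the edges of $\partial f_0$ it touches. Thus the total count is at most $O(1)$ times the number of faces of $\A(\K)$ that are contained in some $\A_i$-face and touch at least three of its edges, which is in turn $O(1) \cdot |\A(\K)| = O(n^4)$ — but this is too weak, so I need the sharper route: charge directly via $\sum_i\sum_{f_0 \in \A_i}(\deg f_0) = O(n^3)$, and observe that Lemma~\ref{nok33} forces only $O(\deg f_0)$ valid triples per face $f_0$ (not $O(\deg f_0)^2$), because each edge $e_0$ of $\partial f_0$ can serve as the first edge of only $O(1)$ valid triples.

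The main obstacle, then, is establishing that a fixed edge $e_0$ of $\partial f_0$ participates in only $O(1)$ valid triples $(f_0, e_0, e_1)$ — equivalently, that the inner faces of $\A(\K)$ contained in $f_0$ and touching $e_0$ together with two other edges of $\partial f_0$ cannot be too numerous when the first of the two "other" edges is pinned to $e_0$. I expect this to follow from a planarity / $K_{3,3}$-style argument in the spirit of Lemma~\ref{nok33} and of \cite{EzS}: if three such inner faces $f^{(1)}, f^{(2)}, f^{(3)}$ all touched $e_0$ and two further boundary edges each, one could extract a forbidden non-planar minor from the adjacency structure these faces impose on $\partial f_0$. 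Once that is in hand, the total is $\sum_i\sum_{f_0\in\A_i} O(\deg f_0) = \sum_i O((n-1)^2) = O(n^3)$, and multiplying by the at-most-two factor from Lemma~\ref{nok33} leaves the bound unchanged, completing the proof.
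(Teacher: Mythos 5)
There is a genuine gap, and it stems from missing the one fact the lemma actually rests on. The statement counts \emph{all} triples $(f_0,e_0,e_1)$ with $e_0,e_1$ edges of a face $f_0$ of $\A_i$, i.e.\ roughly $\sum_{f_0}(\deg f_0)^2$ per arrangement. The paper disposes of this in one line by invoking the classical result that the sum of the squares of the face complexities in an arrangement of $m$ lines (equivalently, great circles on $\SS^2$) is $O(m^2)$ (see \cite{SA}); summing over the $n$ choices of $i$ gives $O(n^3)$ for all triples, with no need to restrict attention to triples that are ``witnessed'' by a face of $\A(\K)$. You only use the weaker fact that the number of edge--face incidences is $O(n^2)$ per arrangement, and then try to recover the second edge by claiming that a fixed edge $e_0$ of $\partial f_0$ can serve as the first edge of only $O(1)$ relevant triples. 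You yourself flag this claim as the ``main obstacle'' and offer only the expectation that a $K_{3,3}$-style argument will deliver it, so the proof is incomplete as written.

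Worse, the claim is false, so this route cannot be repaired as stated. A single face $f$ of $\A(\K)$ contained in $f_0$ may touch arbitrarily many edges of $\partial f_0$ (in the extreme case $f_0$ is crossed by no further circle of $\C(\K)$ and $f=f_0$ touches all of them); then every pair $(e_0,e_1)$ of those edges is witnessed by $f$, giving $\Theta(\deg f_0)$ valid triples anchored at a single $e_0$ and $\Theta((\deg f_0)^2)$ in total. Lemma~\ref{nok33} bounds the number of distinct faces of $\A(\K)$ per fixed \emph{pair} $(e_0,e_1)$ (at most two), not the number of pairs per edge, so it cannot yield the per-edge $O(1)$ bound you need. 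In short, the restricted count you are after is itself of order $\sum_{f_0}(\deg f_0)^2$ in the worst case, and the only way through is the sum-of-squares bound the paper cites; note also that even if your per-edge claim had held, you would have proved only the restricted version of the lemma (triples witnessed by an inner face), not the statement as given, although that restricted version would indeed suffice for the charging argument in Lemma~\ref{Thm:Refine}.
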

\begin{proof}
This follows from the well known result that the sum of the squares of
the face complexities in an arrangement of $n$ lines in the plane is
$O(n^2)$; see, e.g., \cite{SA}. 
The same analysis applies to an arrangement of great circles
on the unit sphere. Summing this bound over all $i$, the lemma follows.
\end{proof}

Lemma \ref{nok33} implies that any triple $(f_0,e_0,e_1)$, as above, 
is charged by at most four 1-level borders $(v,R)$.
Indeed, the triple determines at most two possible faces $f$ 
of $\A$, and the edge $e_0$ determines a 
unique edge of $f$ with $v$ as one of its endpoints.
By Lemma~\ref{zonesq}, the overall number of charged triples
$(f_0,e_0,e_1)$ is $O(n^3)$, so the overall number of 
$1$-level borders $(v,R)$ falling into the last subcase is $O(n^3)$.
This completes the proof of Lemma~\ref{Thm:Refine}.\hfill $\square$


%



\section{Geometric Permutations in Higher Dimensions}\label{Sec:Higher}
In this section we generalize Theorem~\ref{Thm:Main3D} by 
showing that the number of geometric permutations induced by 
any collection $\K=\{K_1,\ldots,K_n\}$ of $n$ pairwise disjoint 
convex sets in $\reals^d$ is 
$O(n^{2d-3}\log n)$, for any $d\geq 3$.

\medskip
\noindent{\bf Setup.}
The basic setup is similar to that in three dimensions, but we repeat
it here for the sake of readability. Specifically, we may assume, using the same reasoning as
before, that the sets of $\K$ are compact (in addition to being 
pairwise disjoint and convex).
For each $1\leq i<j\leq n$ we fix some hyperplane $h_{ij}$ which strictly
separates $K_i$ and $K_j$.  We orient $h_{ij}$ so that $K_i$ lies 
in the negative open halfspace $h_{ij}^-$ that it bounds, and 
$K_j$ lies in the positive open halfspace $h_{ij}^+$.
We represent directions of lines in $\reals^d$ by points on the 
unit $(d-1)$-sphere $\SS^{d-1}$. We may assume that the
separating hyperplanes $h_{ij}$ are in {\em general position}, so that
every $d$ of them intersect in a unique point, and no $d+1$ of them
have a point in common. 

Each separating hyperplane $h_{ij}$ induces a great $(d-2)$-sphere 
$C_{ij}$ on $\SS^{d-1}$, which is the locus of the directions of 
all lines parallel to $h_{ij}$.
$C_{ij}$ partitions $\SS^{d-1}$ into two open hemispheres 
$C_{ij}^+$, $C_{ij}^-$, so that $C_{ij}^+$ (resp., $C_{ij}^-$), 
consists of the directions of lines which cross $h_{ij}$ from 
$h_{ij}^-$ to $h_{ij}^+$ (resp., from $h_{ij}^+$ to $h_{ij}^-$). 
Any oriented common transversal line of $K_i$ and $K_j$ visits 
$K_j$ after (resp., before) $K_i$ if and only if its direction 
lies in $C_{ij}^+$ (resp., in $C_{ij}^-$).

Put $\C(\K)=\{C_{ij}\mid 1\leq i<j\leq n\}$, and
consider the arrangement $\A(\K)$ of these $n\choose 2$ 
$(d-2)$-spheres on $\SS^{d-1}$. 
It partitions $\SS^{d-1}$ into relatively open cells of dimensions $0,1,\ldots,d-1$; we refer to an $s$-dimensional cell of $\A(\K)$ simply as an $s$-cell.
The assumption that the hyperplanes 
$h_{ij}$ are in general position implies that the $(d-2)$-spheres 
of $\C(\K)$ are also in general position, in the sense that the 
intersection of any $s$ distinct spheres of $\C(\K)$, for 
$1\leq s\leq d-1$, is a $(d-s-1)$-sphere, and the intersection 
of any $d$ distinct spheres of $\C(\K)$ is empty.  
Each $(d-1)$-cell $f$ of $\A(\K)$ induces a 
relation $\prec_f$ on $\K$, in which $K_i\prec_f K_j$ 
(resp., $K_j\prec_f K_i$) if $f\subseteq C_{ij}^+$ 
(resp., $f\subseteq C_{ij}^-$).
The direction of each oriented line transversal $\lambda$ 
of $\K$ belongs to the unique $(d-1)$-cell $f$ of $\A(\K)$ whose relation 
$\prec_f$ coincides with the linear order in which $\lambda$ visits 
the sets of $\K$. In particular, as noted by Wenger \cite{Wen2}, the number 
of geometric permutations is bounded by the number of $(d-1)$-cells 
of $\A(\K)$, which is $O(n^{2d-2})$.

We call a $(d-1)$-cell $f$ of $\A(\K)$ a \emph{permutation cell} if there 
is at least one line transversal of $\K$ whose direction belongs 
to $f$.  As in the three-dimensional case, we improve the above bound by showing that the number of
permutation cells in $\A(\K)$ is $O(n^{2d-3}\log n)$, which also 
bounds the number of geometric permutations induced by $\K$.

We refer to $0$-cells in $\A(\K)$ as \emph{vertices}, and 
to $1$-cells as \emph{edges}.  We say that a vertex $v$ of $\A(\K)$ 
is \emph{regular} if the $d-1$ $(d-2)$-spheres of $\C(\K)$ that are
incident to $v$ are defined by $2d-2$ distinct sets of $\K$; otherwise
$v$ is a {\em degenerate} vertex.  Clearly, the number of degenerate 
vertices is $O(n^{2d-3})$, so it suffices 
to bound the number of regular vertices of permutation cells.

As in the three-dimensional case, we will also consider subcollections 
$\K'$ of $\K$, typically obtained by removing one set, say $K_q$, 
from $\K$. Doing so eliminates all separating hyperplanes $h_{iq}$, 
$h_{qi}$, as well as all the corresponding $(d-2)$-spheres $C_{iq}$, 
$C_{qi}$, and $\A(\K')$ is constructed only from the remaining spheres. 
In particular, a vertex\footnote{%
  As in the three-dimensional case, the intersection consists of two
  antipodal points, so there are two choices for $v$.}
$v$ of the intersection 
$C_{i_1j_1}\cap C_{i_2j_2}\cap\cdots\cap C_{i_{d-1}j_{d-1}}$ 
of $\A(\K)$ remains a vertex of $\A(\K')$ if and only if 
$q\not \in \{i_1,j_1,\ldots, i_{d-1},j_{d-1}\}$. A cell of 
$\A(\K')$, of any dimension $s\ge 1$, may contain several cells 
of $\A(\K)$. As before, if $f'$ is a $(d-1)$-cell of $\A(\K')$ 
which contains a permutation cell $f$ of $\A(\K)$ then $f'$ is 
a permutation cell in $\A(\K')$; the permutation that it induces 
is the permutation of $f$ with $K_q$ removed.

Each $s$-cell $f$ of $\A(\K)$ is incident to $2^{d-s-1}$ 
$(d-1)$-cells of $\A(\K)$.  If all these cells are permutation 
cells, $f$ is called \emph{popular}.
In particular, a popular vertex is incident to $2^{d-1}$ permutation
cells, a popular edge is incident to $2^{d-2}$ permutation cells, and
a popular $(d-1)$-cell is a permutation cell.

\medskip
\noindent{\bf Overview.} 
We show that the number of popular vertices is $O(n^{2d-3})$ by a straightforward 
generalization of the analysis in Section~\ref{subsec:pop3D}.
The analysis then proceeds by applying, for each $1\leq s\leq d-1$,
a charging scheme, which expresses the number of popular $s$-cells 
in terms of the number of popular $(s-1)$-cells (and degenerate 
vertices). A naive charging 
scheme produces a recurrence whose solution incurs an additional
logarithmic factor for each $s$, resulting in the weaker bound 
$g_d(n) = O(n^{2d-3}\log^{d-1}n)$. A more careful analysis, as 
in the three-dimensional case, leads to refined recurrences, whose
solution yields the improved bound $g_d(n) = O(n^{2d-3}\log n)$. 
(We lose a logarithmic factor only when passing from vertices to
edges, as in the three-dimensional case.)

\subsection{The number of popular vertices}

For a regular vertex $v\in\bigcap_{q=1}^{d-1}C_{i_qj_q}$ of $\A(\K)$, we 
denote by $\K_v$ the collection $\{K_{i_q},K_{j_q}\mid 1\leq q\leq d-1\}$ 
of the $2d-2$ sets defining $v$.

\begin{lemma}\label{Thm:ConsecutivePairsHigh}
Let $v\in\bigcap_{q=1}^{d-1}C_{i_qj_q}$ be a regular popular vertex of
$\A(\K)$. \\
(i) Each pair of sets $K_a\in \K_v$ and $K_b\in \K\setminus\K_v$
appear in the same order in all the $2^{d-1}$ 
permutations induced by the $(d-1)$-cells incident to $v$.\\
(ii) The elements of each pair $K_{i_q},K_{j_q}\in \K_v$, for 
$1\leq q\leq d-1$, are consecutive in all these $2^{d-1}$ 
permutations.
\end{lemma}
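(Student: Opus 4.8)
The plan is to mimic exactly the proof of Lemma~\ref{Thm:ConsecutivePairs} (the three-dimensional case), which in fact never used the specific value $d=3$. The key structural observation is that the $d-1$ great $(d-2)$-spheres $C_{i_1j_1},\dots,C_{i_{d-1}j_{d-1}}$ incident to a regular vertex $v$ partition a small neighborhood of $v$ on $\SS^{d-1}$ into exactly $2^{d-1}$ $(d-1)$-cells, one for each choice of sign vector $(\sigma_1,\dots,\sigma_{d-1})\in\{+,-\}^{d-1}$, where the cell lies in $\bigcap_{q=1}^{d-1} C_{i_qj_q}^{\sigma_q}$. First I would note that for {\em any} circle $C_{ab}\in\C(\K)$ other than these $d-1$ spheres, the whole neighborhood of $v$ lies on one fixed side of $C_{ab}$, so the relation $\prec_f$ restricted to the pair $(K_a,K_b)$ is the same for all $2^{d-1}$ cells $f$ incident to $v$.

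Consequently, two cells $f,g$ incident to $v$ can differ in their induced relations {\em only} on the $d-1$ pairs $(K_{i_q},K_{j_q})$, $q=1,\dots,d-1$, and more precisely: if the sign vectors of $f$ and $g$ differ in coordinate $q$ then the orders $\prec_f$ and $\prec_g$ differ precisely by swapping $K_{i_q}$ and $K_{j_q}$ (and agree on everything else). Thus all $2^{d-1}$ permutations are obtained from any one of them by independently swapping, for each $q$ in some subset of $\{1,\dots,d-1\}$, the (adjacent) pair $K_{i_q},K_{j_q}$. For this description to be consistent — and this is really the only point requiring a line of justification — the two elements of each pair $K_{i_q},K_{j_q}$ must be {\em adjacent} in every one of these permutations: if $K_{i_q},K_{j_q}$ were not adjacent in some $\prec_f$, then swapping them would not yield a relation differing from $\prec_f$ in just that one transposition, contradicting the fact that the neighboring cell across $C_{i_qj_q}$ differs from $f$ only over the pair $(K_{i_q},K_{j_q})$. (One can also argue this directly, as in Theorem~\ref{Thm:NZero}: if $K_x\prec_f K_{i_q}\prec_f K_{j_q}$ with $x\ne i_q,j_q$, the neighboring cell $g$ across $C_{i_qj_q}$ would force $K_{i_q}\prec_g K_x\prec_g K_{j_q}$, contradicting that $g$ differs from $f$ only on $(K_{i_q},K_{j_q})$.) This establishes part (ii).

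Part (i) then follows immediately: since the only differences among the $2^{d-1}$ permutations are the internal swaps within the pairs $K_{i_q},K_{j_q}$, and since $K_b\notin\K_v$ is never swapped with anything, the relative order of any $K_a\in\K_v$ and $K_b\in\K\setminus\K_v$ is never affected — $K_b$ sits in a fixed ``slot'' relative to the block $\{K_{i_q},K_{j_q}\}$, and whether $K_a$ is the first or second element of that block, it stays on the same side of $K_b$. Hence $K_a$ and $K_b$ appear in the same order in all $2^{d-1}$ permutations.

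I do not expect a genuine obstacle here; the whole content is the bookkeeping of sign vectors around a regular vertex, and the statement and its proof are essentially identical to the planar-circle ($d=3$) case already handled in Lemma~\ref{Thm:ConsecutivePairs}. The only mild subtlety is making sure that ``differing in one sign coordinate'' really does correspond to ``differing by exactly one adjacent transposition,'' which is exactly where the adjacency claim of part (ii) is used; once that is in place, everything is formal. The proof can therefore be written in a few lines, explicitly invoking the structure of the $2^{d-1}$ cells incident to a regular vertex and the fact that non-incident circles do not separate them.
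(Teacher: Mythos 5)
Your proof is correct and takes essentially the same approach as the paper, which likewise notes that only the $d-1$ spheres through $v$ can separate the $2^{d-1}$ incident cells, so the induced orders can differ only by swaps of the pairs $(K_{i_q},K_{j_q})$, from which both parts follow exactly as in the three-dimensional Lemma. (Only your parenthetical aside is slightly garbled: the hypothesis there should be $K_{i_q}\prec_f K_x\prec_f K_{j_q}$, which yields a cycle in $\prec_g$; the main adjacency argument preceding it is fine as stated.)
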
  
\begin{proof}
Each pair of distinct $(d-1)$-cells $f,g$ incident to $v$ are 
separated by at most $d-1$ $(d-2)$-spheres from 
$\{C_{i_1j_1},\ldots,C_{i_{d-1}j_{d-1}}\}$, and only by these spheres.
Hence the orders $\prec_f$ and $\prec_g$ may disagree only over the 
pairs $(K_{i_q},K_{j_q})$, for $1\leq q\leq d-1$.
As in the proof of Lemma~\ref{Thm:ConsecutivePairs}, this is 
easily seen to imply both parts of the lemma.
\end{proof}

\begin{lemma}\label{Thm:PopularLineHigh}
Let $v\in\bigcap_{q=1}^{d-1}C_{i_qj_q}$ be a regular popular vertex in
$\A(\K)$. Then the line $\lambda_v=\bigcap_{q=1}^{d-1}h_{i_qj_q}$ 
stabs all the $n-2d+2$ sets in $\K\setminus\K_v$, and misses all 
the $2d-2$ sets in $\K_v$.
\end{lemma}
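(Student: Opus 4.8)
The plan is to mimic, almost verbatim, the three-dimensional argument from Lemma~\ref{Thm:PopularLine}, with the only change being that the line $\lambda_v$ is now the intersection of $d-1$ hyperplanes rather than two planes, so it is carved out by $2^{d-1}$ dihedral-type wedges rather than four. First I would observe that $\lambda_v$ misses every set in $\K_v$ for free: for each $1\le q\le d-1$, the line $\lambda_v$ lies inside $h_{i_qj_q}$, which strictly separates $K_{i_q}$ from $K_{j_q}$, so $\lambda_v$ is disjoint from both of these $2d-2$ sets. Hence the whole content of the lemma is that $\lambda_v$ is a transversal of $\K\setminus\K_v$.

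Next I would fix an arbitrary $K_a\in\K\setminus\K_v$ and show that $\lambda_v$ meets $K_a$. The $d-1$ hyperplanes $h_{i_qj_q}$ through $\lambda_v$ partition $\reals^d$ into $2^{d-1}$ open ``wedges'', one for each sign vector $\sigma\in\{+,-\}^{d-1}$, namely $W_\sigma=\bigcap_{q=1}^{d-1}h_{i_qj_q}^{\sigma_q}$. I claim $K_a$ meets every one of these $2^{d-1}$ wedges; once this is shown, convexity of $K_a$ forces $K_a$ to contain a point of $\lambda_v$, exactly as in Figure~\ref{fourq}(left) — indeed, $\lambda_v=\bigcap_q h_{i_qj_q}$ is the common boundary of all the wedges, and a convex set meeting all $2^{d-1}$ of them must intersect this common flat (if it missed $\lambda_v$, a separating hyperplane within the affine span would confine $K_a$ to one ``side'', i.e.\ to wedges with a common fixed sign in some coordinate, contradicting that it meets all sign vectors). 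To produce a point of $K_a$ in $W_\sigma$ for a given $\sigma$, I would use Lemma~\ref{Thm:ConsecutivePairsHigh}: $K_a$ occupies the same position in all $2^{d-1}$ permutations near $v$, each pair $K_{i_q},K_{j_q}$ is consecutive, and the $2^{d-1}$ permutations are obtained from one another precisely by independently swapping the two elements of each consecutive pair. Choose the $(d-1)$-cell $f_\sigma$ incident to $v$ lying in $\bigcap_q C_{i_qj_q}^{\tau_q}$, where the orientation $\tau$ is dictated by $\sigma$ together with the (common) relative order of $K_a$ versus each pair $\{K_{i_q},K_{j_q}\}$ in these permutations. Take a transversal line $\mu$ realizing the permutation $\prec_{f_\sigma}$. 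Because the direction of $\mu$ lies in $C_{i_qj_q}^{\tau_q}$, the line $\mu$ crosses $h_{i_qj_q}$ in the direction from the $K_{i_q}$-side to the $K_{j_q}$-side (or vice versa), and it does so between its intersection points with $K_{i_q}$ and $K_{j_q}$. Since $K_a$ lies, in this permutation, either before both, after both, or strictly between the two elements of each pair $q$, the point $\mu\cap K_a$ lies on the prescribed side $h_{i_qj_q}^{\sigma_q}$ of each hyperplane simultaneously; that is, $\mu\cap K_a\subseteq W_\sigma$. Running this over all $\sigma\in\{+,-\}^{d-1}$ gives $K_a\cap W_\sigma\ne\emptyset$ for all $\sigma$, and hence $\lambda_v\cap K_a\ne\emptyset$.

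The main obstacle, and the place where the write-up needs the most care, is the bookkeeping that matches each sign vector $\sigma$ to the correct incident cell $f_\sigma$ and the correct orientation vector $\tau$. In the three-dimensional proof this was handled by splitting into three cases ($K_a$ before both pairs, between, or after both) and noting that ``slightly modified variants of this argument with different correspondences between the wedges and the faces'' cover them; here the analogous statement is that for each $q$ independently the relative order of $K_a$ and the pair $\{K_{i_q},K_{j_q}\}$ is one of three fixed possibilities, and this choice, together with $\sigma_q$, pins down $\tau_q$. Concretely: if $K_a$ precedes both $K_{i_q}$ and $K_{j_q}$, then $\mu\cap K_a$ lies before $\mu$'s crossing of $h_{i_qj_q}$, hence on the side from which $\mu$ departs — so pick $\tau_q$ accordingly; symmetric choices handle the ``after'' case; and if $K_a$ lies strictly between them, $\mu\cap K_a$ may be on either side, but one of the two orientations $\tau_q$ places it on the side $\sigma_q$ we want (and that orientation is the valid choice). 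A clean way to present this is to treat each coordinate $q$ in isolation, establish the one-dimensional fact ``$\mu\cap K_a\subseteq h_{i_qj_q}^{\sigma_q}$ for a suitable choice of $\tau_q$ depending only on $\sigma_q$ and on the position of $K_a$ relative to $\{K_{i_q},K_{j_q}\}$'', and then take the product over $q$, invoking Lemma~\ref{Thm:ConsecutivePairsHigh} to guarantee that these per-coordinate positions are consistent across all the cells around $v$. I would also spell out, as a short separate paragraph, the convexity argument that meeting all $2^{d-1}$ wedges forces $\lambda_v$ to be stabbed: if $K_a$ avoided the $(d-2)$-flat $\lambda_v$, then since $K_a$ is convex and compact and $\lambda_v$ is an affine subspace, there is a hyperplane strictly separating $K_a$ from $\lambda_v$; intersecting with the affine span of the $h_{i_qj_q}$'s localizes $K_a$ to a union of wedges sharing a fixed sign in at least one coordinate, contradicting that $K_a$ meets the wedge with the opposite sign in that coordinate. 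This completes the proof.
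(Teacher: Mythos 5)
Your proposal follows essentially the same route as the paper: observe that $\lambda_v$ lies in each separating hyperplane (so it misses $\K_v$), then for each $K_a\in\K\setminus\K_v$ and each sign vector pick, coordinate by coordinate, the hemisphere $C_{i_qj_q}^{\pm}$ according to whether $K_a$ comes before or after the pair $K_{i_q},K_{j_q}$, take a transversal whose direction lies in the corresponding cell at $v$, deduce that $\mu\cap K_a$ lies in the prescribed wedge, and finish by convexity. One correction, though: the case ``$K_a$ lies strictly between the two elements of the pair $K_{i_q},K_{j_q}$'' that you attempt to handle cannot occur at all, precisely because Lemma~\ref{Thm:ConsecutivePairsHigh}(ii) (which you invoke) makes each pair consecutive in every permutation around $v$ while $K_a\notin\K_v$; and it is fortunate that it is vacuous, because your proposed fix for it --- choosing the orientation $\tau_q$ so as to force $\mu\cap K_a$ onto the desired side of $h_{i_qj_q}$ --- is not sound: when $\mu$ meets $K_a$ between its crossings of $K_{i_q}$ and $K_{j_q}$, the permutation alone does not determine on which side of $h_{i_qj_q}$ the segment $\mu\cap K_a$ falls. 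The write-up should therefore dismiss that case via consecutiveness rather than argue it. Also a small slip: $\lambda_v$ is the intersection of $d-1$ hyperplanes in general position, hence a line (a $1$-flat), not a $(d-2)$-flat; your separation/orthant argument for the final convexity step is fine (and is in fact more detailed than the paper, which simply appeals to the three-dimensional picture).
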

\begin{proof}
By definition, $\lambda_v$ misses every set $K\in \K_v$, because 
it is contained in a hyperplane separating $K$ from another set in 
$\K_v$.  Hence, it suffices to show that $\lambda_v$ is a 
transversal of $\K\setminus\K_v$.

To show this, we fix a set $K_a\in\K\setminus\K_v$ and show that
each of the $2^{d-1}$ wedges determined by 
$\{h_{i_qj_q}\mid 1\leq q\leq d-1\}$ meets $K_a$. Each of these wedges is the intersection of $d-1$ halfspaces, where the $q$-th halfspace is either $h_{i_qj_q}^+$ or $h_{i_qj_q}^-$, for $q=1,\ldots,d-1$. All these wedges have $\lambda_q$ on their boundary, and the convexity 
of $K_a$ then implies, exactly as in the three-dimensional case,
that $\lambda_v$ intersects $K_a$.

For specificity, we show that $K_a$ intersects the wedge
$\bigcap_{q=1}^{d-1} h_{i_qj_q}^+$; 
the proof for the other wedges is essentially the same.
Lemma~\ref{Thm:ConsecutivePairsHigh} implies that $K_a$ lies at the 
same position in each of the $2^{d-1}$ permutations induced by the 
cells incident to $v$. 
For each index $q$, if $K_{i_q},K_{j_q}$ appear before $K_a$ 
(resp., after $K_a$) in all permutations induced by the cells 
incident to $v$, put $C_q = C_{i_qj_q}^+$ 
(resp., $C_q = C_{i_qj_q}^-$).

Let $f$ be the cell incident to $v$ and contained in
$\bigcap_{q=1}^{d-1} C_q$, and let $\mu_f$ be a transversal 
line stabbing $\K$ in the order $\prec_f$ (so its direction lies in
$f$).  By the choice of $f$ and by our assumption, we have either
$K_{i_q}\prec_f K_{j_q} \prec_f K_a$, or
$K_{a}\prec_f K_{j_q} \prec_f K_{i_q}$. This implies in the former
case that $\mu_f$ visits $K_a$ after crossing $h_{i_qj_q}$ from 
$h_{i_qj_q}^-$ (the side containing $K_{i_q}$) to $h_{i_qj_q}^+$ 
(the side containing $K_{j_q}$). In the latter case, $\mu_f$ first
visits $K_a$ and then crosses $h_{i_qj_q}$ from
$h_{i_qj_q}^+$ to $h_{i_qj_q}^-$.  Thus, in either case, the segment 
$\lambda_f\cap K_a$ lies in $h_{i_qj_q}^+$, and this holds for 
every $1\le q\le d-1$. Hence
$\lambda_f\cap K_a \subset \bigcap_{q=1}^{d-1} h_{i_qj_q}^+$,
and the claim follows. 
\end{proof}

\begin{theorem}\label{Thm:NZeroHigh}
Let $\K$ be a collection of $n$ pairwise disjoint compact convex sets 
in $\reals^d$. Then the number of popular vertices in $\A(\K)$ is 
$O(n^{2d-3})$. 
\end{theorem}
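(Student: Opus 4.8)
The plan is to mimic the three-dimensional argument of Theorem~\ref{Thm:NZero}, replacing the single separating plane $h_{ij}$ of the two-circle vertex by a suitable lower-dimensional flat and iterating the Clarkson--Shor reduction. As in three dimensions, every popular vertex must be regular: if $v\in\bigcap_q C_{i_qj_q}$ were degenerate, two of the pairs would share a set, and the argument in Lemma~\ref{Thm:ConsecutivePairsHigh} (each such pair is consecutive in all $2^{d-1}$ neighbouring permutations) produces a contradiction exactly as in the proof of Theorem~\ref{Thm:NZero}. So fix a regular popular vertex $v$ and, by Lemma~\ref{Thm:PopularLineHigh}, the line $\lambda_v=\bigcap_{q=1}^{d-1}h_{i_qj_q}$ is a transversal of the $n-2d+2$ sets in $\K\setminus\K_v$ and misses the $2d-2$ sets in $\K_v$.

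\medskip
First I would recover a ``near-transversal'' of lower-dimensional cross-sections, as in the planar case. Choose the hyperplane $h:=h_{i_1j_1}$, set $K_a^* = K_a\cap h$ for $a\ne i_1,j_1$, and let $\K^*$ be this family of $n-2$ pairwise disjoint compact convex sets inside the $(d-1)$-dimensional flat $h$. The line $\lambda_v$ lies in $h$ (it is the intersection of $h$ with the remaining $d-2$ hyperplanes $h_{i_qj_q}$, $q\ge 2$), it misses exactly the $2d-4$ sets $K_a^*$ with $a\in\K_v\setminus\{K_{i_1},K_{j_1}\}$, and it stabs all the other $n-2d+2$ cross-sections. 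Now I would ``swing'' $\lambda_v$ inside $h$, keeping it tangent to a few sets of $\K^*$, until it becomes an \emph{extremal} transversal line in $h$: translate and rotate $\lambda_v$ within $h$, bringing it into contact with $\K^*$; each such contact cuts one degree of freedom. A line in the $(d-1)$-flat $h$ has $2(d-1)-2 = 2d-4$ degrees of freedom, so after at most $2d-4$ contacts we obtain a line $\mu_v\subset h$ which is tangent to at most $2d-4$ sets of $\K^*$ and misses at most the same $2d-4$ sets as $\lambda_v$ did (the contact sets may coincide with missed sets, so $\mu_v$ misses \emph{at most} $2d-4$ sets and is tangent to \emph{at most} $2d-4$ sets). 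Charge $v$ to $\mu_v$.

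\medskip
The bound then follows from two observations, parallel to those in Theorem~\ref{Thm:NZero}. (a) Each such extremal line $\mu$ is charged $O(1)$ times: $\mu$ lies in a single hyperplane $h=h_{i_1j_1}$ (general position), which fixes the pair $\{i_1,j_1\}$; within that hyperplane the sets of $\K^*$ that $\mu$ misses must be precisely the $2d-4$ cross-sections corresponding to $\K_v\setminus\{K_{i_1},K_{j_1}\}$, and any ``missing slack'' is filled by sets $\mu$ is tangent to, so $\mu$ determines the quadruple-tuple $\K_v$ up to $O(1)$ choices; finally a given $\mu$ is the extremal line produced for only $O(1)$ vertices $v$ lying on it. Hence the number of popular vertices is within a constant factor of the number of charged extremal lines, summed over the $\binom{n}{2}$ choices of $h_{i_1j_1}$. (b) The number of charged extremal lines in a fixed hyperplane $h$ is bounded by Clarkson--Shor: each such line is \emph{defined} by at most $2d-4$ sets of $\K^*$ (the ones it is tangent to) and is \emph{in conflict} with at most $2d-4$ other sets (the ones it misses), so the number of them is $O(L_0(n/2))$, where $L_0(r)$ is the (expected) number of extremal transversal lines of a random $r$-subset of $\K^*$ in the $(d-1)$-dimensional flat $h$. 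Here $L_0(r)$ is exactly the quantity bounded by Wenger's analysis in one dimension lower; inductively on $d$, the number of \emph{permutation cells} (and a fortiori of extremal transversal lines, each of which lies on the boundary of $O(1)$ such cells) of $n$ pairwise disjoint convex sets in $\reals^{d-1}$ is $O(n^{2d-5}\log n)$ — or, in the base case $d=3$, $O(r)$ by Edelsbrunner--Sharir~\cite{ES}. Summing $O\bigl(L_0(n/2)\bigr)=O(n^{2d-5}\log n)$ over the $\binom{n}{2}$ hyperplanes $h_{ij}$ yields $O(n^{2d-3}\log n)$, which absorbs the $O(n^{2d-3})$ degenerate vertices and gives the theorem. (If one prefers to avoid the extra logarithm here and match the clean $O(n^{2d-3})$ in the statement, one instead bounds $L_0(r)$ directly by the number of \emph{extremal} transversal lines, which by a Clarkson--Shor argument reduces further to the transversal-space complexity in dimension $d-2$, ultimately yielding $L_0(r) = O(r^{2d-5})$ without the logarithmic factor; the details are routine.)

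\medskip
The main obstacle is bookkeeping the constants and the exact number of degrees of freedom when ``swinging'' $\lambda_v$ inside $h$: one must verify that exactly $2d-4$ contacts suffice to reach an extremal line, that the missed sets are never accidentally ``lost'' (i.e. an intersected set never becomes disjoint under the motion, which follows since the motion keeps the line inside $h$ and the sets are compact and convex), and that the resulting extremal line is ``defined'' and ``in conflict'' with the right parameters for Clarkson--Shor to apply with the exponent claimed. Everything else is a direct transcription of Section~\ref{subsec:pop3D}.
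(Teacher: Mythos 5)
There is a genuine gap, and it lies exactly where you placed the ``routine details'': your reduction is to the wrong-dimensional flat. You slice by a single hyperplane $h=h_{i_1j_1}$ and must then count, inside a $(d-1)$-dimensional flat, ``extremal'' lines tangent to up to $2d-4$ of the cross-sections. For \emph{arbitrary} pairwise disjoint convex sets in dimension $d-1\ge 3$ no bound of the form $O(r^{2d-5})$ (or indeed any finite bound in terms of $r$ alone) on such tangent lines is available -- two or four arbitrary convex bodies in $\reals^3$ can already have infinitely many common tangent lines, and the paper itself recalls that the transversal space of four disjoint convex sets in $\reals^3$ can have arbitrarily many components. This is precisely the obstruction that confines all known complexity bounds on $\T(\K)$ to sets of constant description complexity, so the quantity $L_0(r)$ you feed into Clarkson--Shor is not controlled by ``Wenger's analysis one dimension lower'': Wenger bounds the number of geometric permutations, not the number of extremal transversal lines, and your attempted bridge (``each extremal line lies on the boundary of $O(1)$ permutation cells'') does not hold -- arbitrarily many extremal tangent lines can have directions in the same cell of the direction-sphere arrangement. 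Your fallback route via permutation cells in $\reals^{d-1}$ is also circular in spirit (it invokes the main theorem of the paper in dimension $d-1$) and at best yields $O(n^{2d-3}\log n)$, which is weaker than the stated $O(n^{2d-3})$. A secondary but real issue is the ``swinging'' step itself: in a flat of dimension $\ge 3$, moving a line while maintaining tangency to several arbitrary convex sets is not a sequence of one-parameter translations/rotations, tangency need not cut a degree of freedom cleanly, and the terminal lines need not be isolated, so even the existence and finiteness of your charged lines $\mu_v$ is unjustified.

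The paper avoids all of this by intersecting with $d-2$ of the hyperplanes $h_{i_qj_q}$ at once: $H=\bigcap_{q=1}^{d-2}h_{i_qj_q}$ is a \emph{two}-dimensional flat containing $\lambda_v$, the cross-sections $K^*_a=K_a\cap H$ form $n-2d+4$ disjoint planar convex sets, and $\lambda_v$ misses only $K^*_{i_{d-1}},K^*_{j_{d-1}}$. In the plane the charging of Theorem~\ref{Thm:NZero} applies verbatim: the extremal line is tangent to two sets and in conflict with at most two, Clarkson--Shor combines with the Edelsbrunner--Sharir linear bound (valid for arbitrary disjoint planar convex sets) to give $O(n)$ charged lines per flat, each charged $O(1)$ times, and summing over the $O(n^{2d-4})$ choices of $H$ gives the clean $O(n^{2d-3})$. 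If you want to repair your write-up, replace your single-hyperplane slice by this $(d-2)$-fold slice; the rest of your argument then collapses to the planar case you already know how to handle.
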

\begin{proof}
As in the three-dimensional case, it is easily checked that a popular vertex must be regular.
Let $v\in\bigcap_{q=1}^{d-1}C_{i_qj_q}$ be a (regular) popular vertex in 
$\A(\K)$, and let $\lambda_v=\bigcap_{q=1}^{d-1}h_{i_qj_q}$ be the 
intersection line of the corresponding hyperplanes. Consider the plane 
$H=\bigcap_{q=1}^{d-2}h_{i_qj_q}$, put $K^*_a = K_a\cap H$, for 
each index $a\not\in \{i_q,j_q\mid 1\leq q\leq d-2\}$, and 
denote by $\K^*$ the collection of these $n-2d+4$ planar cross-sections.
Clearly, all sets in $\K^*$ are pairwise disjoint, compact, and
convex.

By Lemma~\ref{Thm:PopularLineHigh}, $\lambda_v$ lies in $H$, stabs all 
the sets in $\K^*\setminus\{K^*_{i_{d-1}},K^*_{j_{d-1}}\}$, and misses 
the two sets $K^*_{i_{d-1}},K^*_{j_{d-1}}$.  As in 
Theorem \ref{Thm:NZero}, we charge $\lambda_v$ to an extremal line 
$\mu=\mu_v$ within $H$ which is tangent to two sets of $\K^*$, and 
misses only the sets among $K^*_{i_{d-1}},K^*_{j_{d-1}}$ that it does 
not touch. As in the preceding analysis, each extremal line $\mu$ 
of this kind is charged at most twice.
Applying the Clarkson-Shor analysis \cite{CS}, similarly to 
Theorem~\ref{Thm:NZero}, the number of lines $\mu$, 
charged within $H$, is $O(n)$. Summing over all possible choices 
of the 2-planes $H$, namely over all choices of $d-2$ of the 
hyperplanes $h_{ij}$, the number of lines $\lambda_v$, and thus the
number of popular vertices, is $O(n\cdot n^{2d-4})=O(n^{2d-3})$.
\end{proof}

\subsection{The number of permutation cells}

We next generalize the analysis of Section~\ref{Subsec:PermutFaces} 
to higher dimensions. We first extend the notion of borders. Let 
$v$ be a vertex of $\A(\K)$, so that
$v\in\bigcap_{1\leq q\leq d-1}C_{i_qj_q}$.  For any subset $J$ 
of $\{1,\ldots,d-1\}$, let $R\subseteq\SS^{d-1}$ be a connected 
component of $\SS^{d-1}\setminus \bigcup_{q\in J} C_{i_qj_q}$. 
Equivalently, it is the intersection of $|J|$ hemispheres, where
the $q$-th hemisphere, for $q\in J$, is either
$C_{i_qj_q}^+$ or $C_{i_qj_q}^-$. Note that there are $2^{|J|}$ 
such regions (for any fixed $J$). 
We call $(v,R)$ an {\em $s$-border}, where $s=|J|$.
Given $v$ and $R$, for $s\ge 1$, there is a unique 
$s$-dimensional cell $f$ of $\A(\K)$ which is incident to $v$ and 
is contained in the interior of $R$. This cell $f$ lies in the intersection of the interior of $R$ with $\bigcap_{q\in J^c}C_{i_qj_q}$, where $J^c=\{1,\ldots,d-1\}\setminus J$.
We refer to $f$ as the $s$-cell of
$\A(\K)$ associated with $(v,R)$. For $s=0$ we define the  $s$-cell of
$\A(\K)$ associated with $(v,R)$ to be $v$ itself, and for $s=d-1$
we define the  $s$-cell of
$\A(\K)$ associated with $(v,R)$ to be  the unique $(d-1)$-cell incident to $v$ and contained in $R$.
 The reader is invited to check
that, for $d=3$, a $0$-border, in the new definition, is a vertex 
of $\A(\K)$, a $1$-border is an edge border, and a $2$-border is what
we simply called a border.

Let $(v,R)$ be an $s$-border and let $f$ be the $s$-cell associated
with $(v,R)$. If $f$ is a popular cell, we say that $(v,R)$ is a 
\emph{$0$-level} \emph{$s$-border} of $\A(\K)$.  An $s$-border $(v,R)$ 
is a \emph{$1$-level} \emph{$s$-border} in $\A(\K)$ if it is not 
a $0$-level $s$-border, but becomes such a border after removing
from $\K$ some single set $K$. In this case we say that $K$ is 
\emph{in conflict} with $(v,R)$. As in the three-dimensional case, 
$K$ need not be unique.

For each $t = 0,1$ and $0\leq s\leq d-1$, let $N_t^{(s)}(\K)$ 
be the number of $t$-level $s$-borders in $\A(\K)$, and let
$N_t^{(s)}(n)$ denote the maximum value of $N_t^{(s)}(\K)$, 
over all collections $\K$ of $n$ pairwise disjoint compact 
convex sets in $\reals^d$.

Note that $N_0^{(0)}(\K)$ is the number of popular 
vertices in $\A(\K)$, so we have $N_0^{(0)}(n) = O(n^{2d-3})$.
The term $N_0^{(d-1)}(\K)$ counts the overall number of 
vertices incident to permutation cells, where each vertex is 
counted once for each permutation cell incident to it.
Assuming $n\geq d$, each permutation cell in $\A(\K)$ is incident to
at least one vertex (and each vertex is incident to at most $2^{d-1}$ permutation cells). Thus, the number of geometric permutations 
of $\K$ is at most $N_0^{(d-1)}(\K)$. 

For each $1\leq s\leq d-1$, we apply a charging scheme, which results
in a recurrence which expresses $N_0^{(s)}(\K)$ in terms of 
$N_0^{(s-1)}(\K)$ and $N_1^{(s)}(\K)$.

Fix $1\leq s\leq d-1$. Let $(v,R)$ be a $0$-level $s$-border 
in $\A(\K)$, and let $f$ be the popular $s$-cell associated with
$(v,R)$.  
Let $C_{i_1j_1},C_{i_2j_2},\ldots, C_{i_{d-1}j_{d-1}}$ 
be the $d-1$ $(d-2)$-spheres of $\C(\K)$ incident to $v$, and assume,
with no loss of generality, that $R=\bigcap_{q=1}^{s} C_{i_qj_q}^+$. 
Moreover, we may assume that $v$ is regular, since the number of 
$s$-borders incident to degenerate vertices is clearly $O(n^{2d-3})$.
To simplify the notation, we refer to borders incident to a regular
(resp., degenerate) vertex as {\em regular borders} (resp.,
{\em degenerate borders}).

For each $1\leq q\leq s$ there exists a unique edge $e_q^+$ of $f$
which is incident to $v$ and not contained in $C_{i_qj_q}$. Indeed,
by construction, $f$ lies in the intersection $s$-sphere 
$\bigcap_{q=s+1}^{d-1} C_{i_qj_q}$ (for $s=d-1$, this is the 
entire $\SS^{d-1}$), 
and each edge of $f$ incident to $v$ is formed by further
intersecting this sphere with $s-1$ additional spheres from
$C_{i_1j_1},\ldots,C_{i_sj_s}$.  The claim follows since only one 
side of the resulting intersection circle lies (near $v$) in the 
closure of $R$.
Let $e_q^-$ denote the other edge of $\A(\K)$ which is incident 
to $v$ and lies on the same intersection circle $\gamma$ as $e_q^+$, 
so $e_q^-$ emanates from $v$ away from $R$. Let $v_q$ denote the
other endpoint of $e_q^-$, and let $g_q$ denote the (unique) $(s-1)$-cell 
which bounds $f$, lies in
$C_{i_qj_q}\cap\left(\bigcap_{t=s+1}^{d-1} C_{i_tj_t}\right)$, 
is incident to $v$ and is contained in 
$R_q = \bigcap_{1\le t\le s,\,t\ne q} C_{i_tj_t}^+$. 
Also, $g_q$ is the $(s-1)$-cell associated with $(v,R_q)$.
See Figure \ref{Fig:ChargeHigh}. 
There are two possible cases:

\begin{figure}[htbp]
\begin{center}
\input{BordersHigh.pstex_t}
\caption{\small\sf
Charging a $0$-level $s$-border along the edge $e_q$.} 
\label{Fig:ChargeHigh}
\end{center}
\end{figure}

(i) If one of the $(s-1)$-cells $g_1,g_2,\ldots,g_s$, say $g_s$, is
popular, we charge $(v,R)$ to the $0$-level $(s-1)$-border $(v,R_s)$, 
noting, as above, that $g_s$ is the $(s-1)$-cell associated with
this border.  By construction, each $0$-level $(s-1)$-border $(v,R)$ is charged at
most $2(d-s)$ times in this manner, once from each $s$-border
associated with an $s$-cell which is bounded by the $(s-1)$-cell 
associated with this border (there are $d-s$ choices for the great sphere $C_{i_tj_t}$ that participates in the definition of $(v,R)$ but is absent in the $s$-border, and two choices of the corresponding hemisphere $C_{i_tj_t}^+,C_{i_tj_t}^-$).
Hence, the number of $0$-level $s$-borders falling into subcase (i) is 
$O\left(N_0^{(s-1)}(\K)\right)$.

(ii) None of the $(s-1)$-cells $g_1,g_2,\ldots,g_s$ is popular.
For each $1\leq q\leq s$, let $C_{k_q\ell_q}$ be the additional
great sphere incident to $v_q$, and suppose, for specificity,
that $v\in C_{k_q\ell_q}^+$. The vertex $v_q$ participates in the
$1$-level $s$-border $(v_q,R'_q)$, where 
$R'_q = C_{k_q\ell_q}^+\cap
\left(\bigcap_{1\le t\le s,\,t\ne q} C_{i_tj_t}^+\right)$. 

Since $g_q$ is not popular, $(v_q,R'_q)$ is not a $0$-level
$s$-border. Let $f_q$ be the $s$-cell associated with $(v_q,R'_q)$. 
Clearly, at least one of $i_q,j_q$ does not belong to $\{k_q,\ell_q\}$; 
say it is $i_q$.  Thus, and since $v$ is regular, removing $K_{i_q}$ keeps $v_q$ (and hence 
$(v_q,R'_q)$) intact, and makes $f$ and $f_q$ fuse into a larger 
$s$-cell $f'$ containing both of them. Clearly, $f'$ is the cell 
associated with $(v_q,R'_q)$ in $\A(\K\setminus\{K_{i_q}\})$,
and it is popular there because $f\subset f'$ was popular in
$\A(\K)$. We say that the borders $(v,R)$, $(v_q,R'_q)$ are 
{\em neighbors} in $\A(\K)$.

We then charge $(v,R)$ to its $s$ neighboring $1$-level $s$-borders 
$(v_q,R'_q)$, for $q=1,\ldots,s$. 
Note that each $1$-level $s$-border $(v,R)$ is charged at most $s$ 
times, once along each of the $s$ edges, incident to $v$, of the $s$-cell associated with it.
We thus obtain the following recurrence.
\begin{equation}\label{Eq:ChargeNaiveHigh}
N_0^{(s)}(\K)\leq N_1^{(s)}(\K) + 
O\left(N_0^{(s-1)}(\K)+n^{2d-3}\right),
\end{equation}
where the first term in the right hand side bounds the number of 
$0$-level $s$-borders falling into case (ii), and the second term 
bounds the number of the remaining $0$-level $s$-borders.

Similarly to the three-dimensional case, we combine the system 
(\ref{Eq:ChargeNaiveHigh}) of recurrences with the analysis technique
of Tagansky, and solve the resulting recurrences to obtain a slightly
inferior bound (involving a larger polylogarithmic factor). We then
refine the recurrences, using a more careful analysis, similar to 
the one in Section~\ref{Sec:Perms3D}, and thereby obtain the improved
bound $O(n^{2d-3}\log n)$.

\medskip
\noindent{\bf Applying Tagansky's technique: The simpler variant.} We prove 
that $N_0^{(s)}(n)=O(n^{2d-3}\log^s n)$ by induction on $s$.
For the base case $s=0$, we have $N_0^{(0)}(n)=O(n^{2d-3})$ by
Theorem~\ref{Thm:NZeroHigh}. Consider a fixed $s\ge 1$ and assume
that the bound holds for $s-1$, so (\ref{Eq:ChargeNaiveHigh}) becomes  
\begin{equation}\label{Eq:ChargeEdgesHigh}
N_0^{(s)}(\K)\leq N_1^{(s)}(\K)+O(n^{2d-3}\log^{s-1}n).
\end{equation}

Let $\R$ be a random sample of $n-1$ sets of $\K$, obtained by 
removing a random set $K$ from $\K$. The expected number of 
$0$-level popular $s$-borders in $\A(\R)$ satisfies 
\begin{equation}\label{Eq:HighCS}
{\bf E}\{N_0^{(s)}(\R)\}\geq \frac{n-2d+2}{n}N_0^{(s)}(\K)+\frac{1}{n}N_1^{(s)}(\K) .
\end{equation}
This follows since a $0$-level $s$-border $(v,R)$ (with $v$ regular)
survives after removing $K$ if and only if $K\not\in \K_v$, and
a $1$-level $s$-border becomes a $0$-level $s$-border if and only 
if it is in conflict with $K$.  Combining this inequality with 
(\ref{Eq:ChargeEdgesHigh}), we get
\begin{eqnarray*}
\lefteqn{\frac{1}{n}N_0^{(s)}(\K) \le \frac{1}{n}N_1^{(s)}(\K)
+ O(n^{2d-4}\log^{s-1}n) \le} \\ & &
{\bf E}\left\{N_0^{(s)}(\R)\right\} - 
\frac{n-2d+2}{n}N_0^{(s)}(\K) + O(n^{2d-4}\log^{s-1}n) ,
\end{eqnarray*}
or
$$
\frac{n-2d+3}{n}N_0^{(s)}(\K) \le 
{\bf E}\left\{N_0^{(s)}(\R)\right\} + O(n^{2d-4}\log^{s-1}n) .
$$
Replacing $N_0^{(s)}(\K)$ and $N_0^{(s)}(\R)$ by their respective
maximum possible values $N_0^{(s)}(n)$ and $N_0^{(s)}(n-1)$,
we get the recurrence
$$
\frac{n-2d+3}{n}N_0^{(s)}(n) \le N_0^{(s)}(n-1) + O(n^{2d-4}\log^{s-1}n) ,
$$
whose solution is easily seen to be $N_0^{(s)}(n) = O(n^{2d-3}\log^s n)$.
This establishes the induction step and thus proves the asserted bound. 
In particular, we have so far
$$
g_d(n)=O(n^{2d-3}\log^{d-1} n).
$$

\medskip
\noindent{\bf Improved bounds for $s\geq 2$.} 
As promised, we next refine the analysis, and show that 
\begin{equation}\label{Eq:BordersHighImprove}
N_0^{(s)}(\K)=O(n^{2d-3}\log n),
\end{equation}
for any $1\leq s\leq d-1$, by establishing a sharper variant of (\ref{Eq:ChargeNaiveHigh}).

As in the three-dimensional case, the weakness of the preceding
analysis lies in the random sampling inequality (\ref{Eq:HighCS}), 
or, more precisely, in the term $N_1^{(s)}(\K)/n$ thereof.

Specifically, if a $1$-level $s$-border $(v,R)$ is in conflict with
$w>1$ sets of $\K$ then removing any one of these sets will make
$(v,R)$ a $0$-level $s$-border, so the probability of this to happen
is $w/n$, which is significantly larger than the bound $1/n$ used in 
(\ref{Eq:HighCS}). As above, we refer to $w$ as the {\em weight}
of $(v,R)$.  We can therefore modify the definition of 
$N_1^{(s)}(\K)$ so a border of weight $w$ is counted $w$ times. 
The preceeding discussion ensures that (\ref{Eq:HighCS}) still 
holds in the new setting.

We proceed to prove \ref{Eq:BordersHighImprove} by induction on $s$. The base case $s=1$ has already been
analyzed, and we have shown that $N_0^{(1)}(n)=O(n^{2d-3}\log n)$.
Fix $2\leq s\leq d-1$, and suppose that we have already proved that
$N_0^{(s')}(\K)=O(n^{2d-3}\log n)$, for all $1\leq s'<s$.

The following lemma generalizes Lemma \ref{Thm:Refine} to 
arbitrary dimension $d\geq 4$.
\begin{lemma}\label{Thm:RefineHighDim}
(i) The number of $1$-level $2$-borders, having weight 
$1$ and charged by two $0$-level neighboring $2$-borders,
is $O(N_1^{(1)}(\K)+n^{2d-3})$. \\
(ii) For $s\ge 3$, there are no $1$-level $s$-borders incident
to a regular vertex, having weight $1$, and charged by $s$ 
$0$-level neighboring $s$-borders.
\end{lemma}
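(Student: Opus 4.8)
The plan is to prove part~(i) by transcribing the proof of Lemma~\ref{Thm:Refine}, now carried out inside the great $2$-sphere that contains the relevant $2$-cell, and to turn the same analysis into an impossibility statement for part~(ii). For part~(i), let $(v,R)$ be a $1$-level $2$-border of weight~$1$ charged by two $0$-level neighbours. Since there are only $O(n^{2d-3})$ borders at degenerate vertices, I may assume $v$ is regular, incident to $C_{i_1j_1},\dots,C_{i_{d-1}j_{d-1}}$, and, after reindexing, $R=C_{i_1j_1}^+\cap C_{i_2j_2}^+$, so that the associated $2$-cell $f$ lies in the $2$-sphere $S=\bigcap_{q=3}^{d-1}C_{i_qj_q}$. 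Each $C_{k\ell}\cap S$ is a great circle of $S$, and $f$, $v$, its two edges inside $S$, and the two charging $0$-level borders form on $S$ precisely the configuration analyzed in Lemma~\ref{Thm:Refine}. I would repeat that analysis verbatim: the two far vertices are regular (else they do not charge $(v,R)$); using this and weight~$1$, there is a single conflict index $q$, and the extra spheres at the two far vertices are $C_{i_1q}$ and $C_{i_2q}$, after relabelling each pair so that $i_1,i_2$ become the two special indices. The same three subcases then occur: (a)~$f$ is the quadrangle whose fourth vertex $C_{i_1q}\cap C_{i_2q}\cap S$ is degenerate, contributing $O(n^{2d-3})$; (b)~after deleting a single set the appropriate neighbouring edge extends into a longer popular edge, so $(v,R)$ is charged to a $1$-level $1$-border, which is charged $O(1)$ times, contributing $O(N_1^{(1)}(\K))$; (c)~the ``pentagonal'' case, where $\partial f$ touches three distinct boundary edges of the face $f_0$ of the circle arrangement $\A_i\cap S$ (here $\A_i$ is the arrangement of the $(d-2)$-spheres $C_{ir},C_{ri}$, $r\ne i$), and $(v,R)$ is charged to the triple $(f_0,e_0,e_1)$. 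By the adaptation of Lemma~\ref{nok33} to circles on $S$, each triple is charged $O(1)$ times; by the adaptation of Lemma~\ref{zonesq}, the number of triples for a fixed $i$ and a fixed $S$ is $O(n^2)$; summing over the $n$ choices of $i$ and the $O(n^{2(d-3)})$ choices of the $d-3$ spheres cutting out $S$ gives $O(n^{2d-3})$. Adding the three contributions yields the bound in~(i). The only thing to check beyond bookkeeping is that every two-dimensional sub-argument of Lemma~\ref{Thm:Refine} is insensitive to living in a great $2$-sphere rather than in $\SS^2$, which is immediate.

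For part~(ii), let $(v,R)$ be a $1$-level $s$-border with $v$ regular, weight~$1$, charged along each of the $s$ edges $e_1,\dots,e_s$ of its associated $s$-cell $f$ at $v$; write $C_{\alpha_q\beta_q}$, $q=1,\dots,s$, for the spheres at $v$ with $R=\bigcap_{q=1}^sC_{\alpha_q\beta_q}^+$, so that $e_q$ lies on $\bigcap_{t\ne q}C_{\alpha_t\beta_t}$ with far endpoint $u_q$ and extra sphere $C_{\sigma_q\tau_q}$. Arguing exactly as in Lemma~\ref{Thm:Refine}: every $u_q$ is regular; the conflict index contributed by the charge along $e_q$ is the element of $\{\sigma_q,\tau_q\}$ lying outside $\{\alpha_q,\beta_q\}$; and weight~$1$ forces all these conflict indices to equal one index $\rho$, and, moreover, forces $\tau_q\in\{\alpha_q,\beta_q\}$, since otherwise $\tau_q$ would be a second conflict index. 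Hence, after relabelling each pair, the extra sphere at $u_q$ is $C_{\rho\alpha_q}$, and $f$ has, for every $q$, a facet $g_q\subset C_{\rho\alpha_q}$ incident to $u_q$ whose opposite $s$-cell is popular.

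It remains to contradict this when $s\ge 3$. I would rerun the subcase analysis of part~(i) inside the two-dimensional sub-cells $p_{qq'}$ of $f$ at $v$ spanned by the pairs of edges $e_q,e_{q'}$ (each $p_{qq'}$ lies on the $2$-sphere $\bigcap_{t\ne q,q'}C_{\alpha_t\beta_t}$). In part~(i) that analysis left exactly one non-vacuous case; here the weight-$1$ condition, together with the requirement that all $s$ neighbours at $u_1,\dots,u_s$ be $0$-level (which forces certain sub-cells incident to the $u_q$ to be non-popular), make all of cases (a)--(c) vacuous: cases (b) and (c) would exhibit a second conflict index (equivalently, some further single deletion would also make $f$ popular), contradicting weight~$1$; and in case (a) all the $p_{qq'}$ would have to be quadrangles with degenerate fourth vertices on $C_{\rho\alpha_q}\cap C_{\rho\alpha_{q'}}$, which cannot be glued consistently around $v$ --- the link of $v$ in $f$ is an $(s-1)$-simplex with facets on $C_{\alpha_1\beta_1},\dots,C_{\alpha_s\beta_s}$, and for $s\ge 3$ there is no way to fit the facets $g_1,\dots,g_s$ and these quadrangles into its boundary. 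The main obstacle is exactly this last, essentially combinatorial, impossibility argument for $s\ge 3$; extracting the rigid local structure above is routine, being identical to the three-dimensional case.
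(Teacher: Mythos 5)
Your part~(i) is essentially the paper's own argument: the paper also handles $s=2$ by rerunning Lemma~\ref{Thm:Refine} inside the $2$-sphere $\sigma_0=\bigcap_{q=3}^{d-1}C_{i_qj_q}$, splitting into the case bounded by $O(N_1^{(1)}(\K))$ and the case handled by the $K_{3,3}$ argument of Lemma~\ref{nok33} together with the sum-of-squared-face-complexities bound of Lemma~\ref{zonesq}, applied in the arrangement $\A_{i_1}^{(\sigma_0)}$ and summed over the $O(n^{2(d-3)})$ choices of $\sigma_0$; your bookkeeping matches.

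Part~(ii) has a genuine gap. Your route (rerun the three-dimensional subcase analysis inside the two-dimensional faces $p_{qq'}$ of $f$ at $v$ and derive an impossibility) rests on two assertions, neither of which is proved and both of which are doubtful. First, you claim that the analogues of cases (b) and (c) "would exhibit a second conflict index." That does not follow: the fact that a far vertex of $f$ lies on an extra sphere $C_{ab}$ with $a,b$ new indices does not make $K_a$ or $K_b$ a conflict of $(v,R)$, because the cell adjacent to $f$ across the corresponding facet need not be a permutation cell; indeed, in three dimensions exactly these configurations do occur, and Lemma~\ref{Thm:Refine} merely counts them rather than excluding them. You give no argument showing that the stronger hypotheses for $s\ge 3$ (all $s$ neighbours being $0$-level) change this. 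Second, the concluding "gluing"/link-of-$v$ impossibility for the all-quadrangle case is only asserted. For $s=3$, $d=4$ the configuration you need to exclude is a combinatorial cube-corner cell whose three far $2$-faces lie on $C_{i_1b},C_{i_2b},C_{i_3b}$, with degenerate far vertices (only $v$ is required to be regular in the lemma), and nothing in the incidence combinatorics alone rules this out; the obstruction, if any, must come from the transversal structure, which your argument never invokes at that point.

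The paper's proof of (ii) is of a different nature and is exactly the ingredient your sketch lacks: it uses the orders. Having pinned down, as you do, that the extra sphere at each $v_k$ is $C_{i_kb}$ with a single common conflict set $K_b$, one fixes a pair $k\ne\ell$, takes a permutation cell incident to the popular cell $f_k$ inside $R_k$ and its "twin" incident to $f_\ell$ inside $R_\ell$ (same orientation of every pair $(i_q,j_q)$, $q>s$); the two cells are connected by crossing $C_{i_kb}$ and then $C_{i_\ell b}$, so the two permutations are obtained from one another by swapping $K_b$ with $K_{i_k}$ and then $K_b$ with $K_{i_\ell}$. This forces $K_{i_k}$ and $K_{i_\ell}$ to be adjacent in those permutations, and since $s\ge 3$ sets cannot be pairwise adjacent in a linear order, no such $1$-level $s$-border exists. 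Without this (or some substitute argument using popularity of all the $f_k$ and the weight-$1$ condition), your part~(ii) does not go through.
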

\noindent{\bf Proof of Lemma \ref{Thm:RefineHighDim}.}
The proof of (i) is very similar to the proof of
Lemma~\ref{Thm:Refine}, and will be briefly presented later, after we
prove (ii).

So we assume that $s\ge 3$.
Let $(v,R)$ be a $1$-level $s$-border which has weight $1$ and is
charged by $s$ $0$-level neighboring $s$-borders, so that $v$ is
regular.  Let $C_{i_1j_1},C_{i_2j_2},\ldots,C_{i_{d-1}j_{d-1}}$ be the 
$(d-2)$-spheres incident to $v$. Without loss of generality, assume
that $R = \bigcap_{q=1}^{s} C^+_{i_qj_q}$. Let $f$ be the $s$-cell
associated with $(v,R)$, and let $e_1,\ldots,e_s$ be the $s$ edges of
$f$ incident to $v$, so that, for each $k=1,\ldots,s$, the edge $e_k$
lies on the circle $\bigcap_{1\le q\le d-1,q\ne k} C_{i_qj_q}$.
For $k=1,\ldots,s$, let $v_k$ denote the other endpoint of $e_k$, 
and let $C_{a_kb_k}$ denote the (unique) great sphere incident to 
$v_k$ and not containing $e_k$. Assume, without loss of generality,
that $v$ lies in $C^-_{a_kb_k}$, and put 
$R_k = C^+_{a_kb_k} \cap\bigcap_{1\le q\le s,q\ne k} C^+_{i_qj_q}$.
By construction, the $s$ $s$-borders $(v_k,R_k)$, for $k=1,\ldots,s$,
are precisely those that charge $(v,R)$, so they are all regular 
$0$-level $s$-borders. 

Note that $(v,R)$ is in conflict with each of the sets
$K_{a_1},K_{b_1},\ldots,K_{a_s},K_{b_s}$ for which the corresponding
index $a_k$ or $b_k$ is not one of $i_1,j_1,\ldots,i_{d-1},j_{d-1}$.
Indeed, removing such a set $K_{a_k}$, say, eliminates the sphere
$C_{a_kb_k}$ and thereby exposes $v$ to the extended $2^{d-1-s}$ 
permutation cells that surround $v_k$, so that they are all now
contained in $R$, so $(v,R)$ becomes a $0$-level $s$-border. 
However, since the weight of $(v,R)$ is $1$, only one of these 
sets, call it $K_b$, can be in conflict with $(v,R)$
(so $b\notin \{ i_1,j_1,\ldots,i_{d-1},j_{d-1} \}$).
This, and the fact that each of the $v_k$'s is regular,
is easily seen to imply the following property:
For each $k$, one of $a_k,b_k$, say $a_k$, belongs to $\{i_k,j_k\}$,
and the other index $b_k$ is $b$.

Fix a pair of distinct vertices $v_k$, $v_{\ell}$, and denote by
$\Pi_k$ (resp., $\Pi_\ell$) the collection of the $2^{d-1-s}$
permutations induced by the permutation cells that surround $v_k$ 
(resp., $v_\ell$) and are contained in $R_k$ (resp., $R_\ell$). 
Any pair of permutations in $\Pi_k$ differ from each other only
by swaps of some of the pairs $(i_q,j_q)$, for $q=s+1,\ldots,d-1$.
Hence the indices of each of these pairs appear consecutively 
in any of these permutations, and the locations of these pairs are 
fixed for all permutations. The set $K_b$ appears, somewhere in
between these pairs, in a fixed location in all permutations. 
A similar property holds for the permutations in $\Pi_\ell$.

Fix a permutation $\pi\in \Pi_k$. It has a ``twin'' permutation $\pi'$
in $\Pi_\ell$, in which the order of the two indices in each
of the pairs $(i_q,j_q)$, for $q=s+1,\ldots,d-1$, is the same as their
order in $\pi$. To gain more insight into the structure of $\pi$ and $\pi'$, let $\varphi$ and $\varphi'$ denote, respectively,
the permutation cells of $\A(\K)$ in which $\pi$ and $\pi'$ are 
generated. We can get from $\varphi$ to $\varphi'$ by first
crossing $C_{i_kb}$ into a corresponding $(d-1)$-cell $\varphi_0$
surrounding $f$ and then cross $C_{i_\ell b}$ into $\varphi'$.
This means that $\prec_{\varphi}$ and $\prec_{\varphi'}$ (i.e., $\pi$ and $\pi'$) are obtained
from each other by first swapping $K_b$ with $K_{i_k}$ and then by
swapping $K_b$ with $K_{i_\ell}$. As is easily checked, this implies
that $K_{i_k}$ and $K_{i_\ell}$ must be adjacent in $\pi$ and in
$\pi'$. This however cannot hold for {\em every} pair of distinct 
indices in $\{i_1,\ldots,i_s\}$ if $s\ge 3$. This contradiction
shows that for $s\ge 3$ there are no $1$-level $s$-borders which 
satisfy the assumptions in the lemma. This completes the proof 
of part (ii).

We now consider the case $s=2$, which, as noted above, can be handled 
in a manner that is very similar to the analysis in 
Lemma~\ref{Thm:Refine}. Specifically, let $(v,R)$ be a regular
$1$-level $2$-border of weight $1$ which is charged by two $0$-level 
$2$-borders $(v_1,R_1)$, $(v_2,R_2)$. (The number of degenerate
$1$-level $2$-borders is $O(n^{2d-3})$.) As in the proof of part (i), 
we may assume that both $v_1$ and $v_2$ are regular (for otherwise 
they would not charge $(v,R)$). Let 
$C_{i_1j_1},C_{i_2j_2},\ldots,C_{i_{d-1}j_{d-1}}\in \C(\K)$ be the 
$(d-2)$-spheres incident to $v$, and assume that 
$R= C_{i_1j_1}^+\cap C_{i_2j_2}^+$.  
Let $f$ be the 2-face associated with $(v,R)$.
For $k=1,2$, let $e_k$ denote the edge of $f$ incident to $v$ 
and contained in $C_{i_kj_k}$, and assume that $v_k$ is the 
other endpoint of $e_k$. Let $C_{a_kb_k}$ be the (unique) 
great sphere passing through $v_k$ and not containing $e_k$.

\begin{figure}[htbp]
\begin{center}
\input{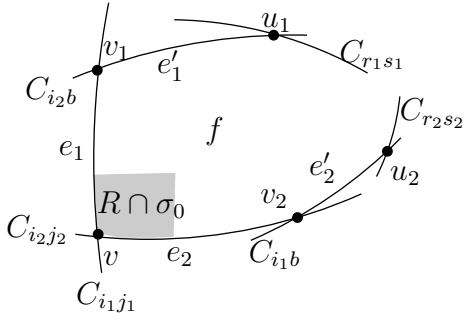}
\caption{\small\sf
The setup in the proof of Theorem \ref{Thm:RefineHighDim} 
for $s=2$: View within the sphere $\sigma_0=\bigcap_{q=3}^{d-1}C_{i_qj_q}$.} 
\label{Fig:RefineWeightHigh}
\end{center}
\end{figure}

As in the three-dimensional case, and similar to the preceding
analysis, since $(v,R)$ has weight $1$, the only case to be 
considered, up to symmetry, is where $a_1=i_2$, $a_2=i_1$, and 
$b_1=b_2=b$, where $b\neq \{i_1,j_1,\ldots,i_{d-1},j_{d-1}\}$.

The proof now continues as in the three-dimensional case, and 
we only provide a brief sketch of it. For $k=1,2$, we consider 
the other edge $e'_k$ of $f$ incident to $v_k$, denote by $u_k$ 
the other endpoint of $e_k$, and assume that neither of $u_1,u_2$ 
is degenerate. See Figure \ref{Fig:RefineWeightHigh}. 
We then consider the other great sphere $C_{r_ks_k}$ incident 
to $u_k$, for $k=1,2$, and distinguish between the following two cases:

\noindent
(a) $i_1\neq r_1,s_1$ or $i_2\neq r_2,s_2$. In the former case, 
removing $K_{i_1}$ leaves $u_1$ intact and extends $e'_1$ into 
a $0$-level $1$-border; the proof is argued exactly as in the 
three-dimensional case. The latter case is handled symmetrically, 
and we conclude that the number of $2$-borders of this kind is 
$O(N_1^{(1)}(\K))$.

\noindent
(b) $i_1=r_1$ and $i_2=r_2$ (or any of the symmetric pairs of
equalities). In this case we consider the $2$-sphere
$\sigma_0=\bigcap_{q=3}^{d-1}C_{i_qj_q}$ which contains $f$, and 
construct in it the arrangement $\A_{i_1}^{(\sigma_0)}$, formed 
by the circles $C_{i_1x}\cap \sigma_0$, for 
$x\notin \{i_1\}\cup\{i_3,j_3,\ldots,i_{d-1},j_{d-1}\}$. 
We note that $f$ is contained in a face $f_0$ of 
$\A_{i_1}^{(\sigma_0)}$ and touches its boundary at three 
distinct edges. This allows us to bound the number
of $2$-borders under consideration by $O(n^3)$, for a fixed choice of
$i_3,j_3,\ldots,i_{d-1},j_{d-1}$, arguing exactly as in the
three-dimensional case. In total, the number of these $2$-borders is
$O(n^3\cdot n^{2d-6})=O(n^{2d-3})$. This completes the proof of the lemma.
\hfill $\square$

\medskip
First, for $s=2$, we bound the quantity $N_1^{(1)}(\K)$ using the Clarkson-Shor analysis 
technique \cite{CS}, as we did in the proof of Lemma \ref{Thm:Refine}. 
That is, since each $1$-level $1$-border is defined by at most $2d-2$ sets 
of $\K$ and becomes a $0$-level $1$-border when we remove (at least) 
one set from $\K$, the number of $1$-level $1$-borders is 
$O\left({\bf E}\{N_0^{(1)}(\K')\}\right)$, where $\K'$ is a random 
sample of $n/2$ sets of $\K$.  Thus, combining this with the 
bound already established for $s=1$, we have
\begin{equation}\label{Eq:HighClarksonShor}
N_1^{(1)}(\K)=O\left({\bf E}\{N_0^{(1)}(\K')\}\right) =
O\left(N_0^{(1)}(n/2)\right)=O(n^{2d-3}\log n).
\end{equation}

With these preparations, we are now ready to complete the 
induction step for $s$.

Let $N_{1,1}^{(s)}(\K)$ denote the number of 1-level $s$-borders
having weight $1$, and let $N_{1,2}^{(s)}(\K)$ denote the number of
1-level $s$-borders having weight \textit{at least} $2$. Since a
1-level $s$-border of weight $w_i$ contributes to $N_{1}^{(s)}(\K)$
$w_i$ units, we have

\begin{equation}\label{Eq:HighDimWeightedSum}
N_1^{(s)}(\K)\geq N_{1,1}^{(s)}(\K)+2N_{1,2}^{(s)}(\K).
\end{equation}

Recall that we charge every $0$-level $s$-border (falling into subcase
(ii)) to $s$ neighboring 1-level $s$-borders. By Lemma
\ref{Thm:RefineHighDim} (and (\ref{Eq:HighClarksonShor})), all but
$O(n^{2d-3}\log n)$ 1-level $s$-borders, that have weight $1$, are
charged by at most $s-1$ neighboring 0-level $s$-borders. 
(This is the situation for $s=2$; the bound drops to $O(n^{2d-3})$ for $s\ge 3$.) Thus, we obtain the following refinement of \ref{Eq:ChargeNaiveHigh}:
\begin{equation}\label{Eq:HighDimChargingRefine}
sN_0^{(s)}(\K)\leq (s-1)N_{1,1}^{(s)}(\K)+sN_{1,2}^{(s)}(\K)+O(n^{2d-3}\log n).
\end{equation}

The combination of (\ref{Eq:HighDimWeightedSum}) and
(\ref{Eq:HighDimChargingRefine}), and the assumption that $s\geq 2$ (so $s/(s-1)\geq 2$) imply that,
$$
\frac{s}{s-1}N_0^{(s)}(\K)\leq N_1^{(s)}(\K)+O(n^{2d-3}\log n).
$$

Substituting $t=\frac{s}{s-1}-1=\frac{1}{s-1}>0$ and combining 
this with (\ref{Eq:HighCS}), we get
$$
\frac{1+t}{n}N_0^{(s)}(\K) \le \frac{1}{n}N_1^{(s)}(\K)
+ O(n^{2d-4}\log n)$$
$$
\le
{\bf E}\left\{N_0^{(s)}(\R)\right\} - 
\frac{n-2d+2}{n}N_0^{(s)}(\K) + O(n^{2d-4}\log n) ,
$$
or
$$
\frac{n-2d+3+t}{n}N_0^{(s)}(\K) \le 
{\bf E}\left\{N_0^{(s)}(\R)\right\} + O(n^{2d-4}\log n) .
$$
Thus, as above, we get the following recurrence
$$
\frac{n-2d+3+t}{n}N_0^{(s)}(n) \le N_0^{(s)}(n-1) + O(n^{2d-4}\log n) ,
$$
whose solution is easily seen to be 
$$
N_0^{(s)}(n) = O(n^{2d-3}\log n)
$$
(see, e.g., \cite[Proposition 3.1]{Tagansky}), which readily implies Theorem \ref{Thm:MainHigh}.
This completes the induction step and thus establishes \ref{Eq:BordersHighImprove} for all $s$.
We thus obtain the main result of the paper.
\begin{theorem}\label{Thm:MainHigh}
Any collection $\K$ of $n$ pairwise disjoint convex sets in 
$\reals^d$, for any $d\ge 3$,
admits at most $O(n^{2d-3}\log n)$ geometric permutations.
\end{theorem}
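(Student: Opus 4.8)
The plan is to follow the three-dimensional roadmap of Section~\ref{Sec:Perms3D} almost verbatim, replacing the hierarchy ``vertices $\to$ edges $\to$ faces'' by the longer hierarchy of $s$-cells, for $s=0,1,\ldots,d-1$, and tracking the permutation-cell count through one charging recurrence per level. First I would set up the Wenger arrangement $\A(\K)$ of the $\binom n2$ great $(d-2)$-spheres $C_{ij}$ on $\SS^{d-1}$, note that permutations correspond to permutation cells, and that (since $n\ge d$, and each vertex is incident to at most $2^{d-1}$ cells) the number of permutation cells is bounded by $N_0^{(d-1)}(\K)$, the number of (vertex, permutation-cell) incidences. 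Degenerate vertices contribute only $O(n^{2d-3})$ throughout, so everywhere one may restrict to regular borders.

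The base case is Theorem~\ref{Thm:NZeroHigh}: the number of popular vertices is $O(n^{2d-3})$. This generalizes Theorem~\ref{Thm:NZero} directly --- Lemmas~\ref{Thm:ConsecutivePairsHigh} and~\ref{Thm:PopularLineHigh} show that for a regular popular vertex $v=\bigcap_{q=1}^{d-1}C_{i_qj_q}$ the line $\lambda_v=\bigcap_{q=1}^{d-1}h_{i_qj_q}$ stabs all sets of $\K\setminus\K_v$ and misses $\K_v$; then, working inside the $2$-plane $H=\bigcap_{q=1}^{d-2}h_{i_qj_q}$, translate and rotate $\lambda_v$ to an extremal bitangent line $\mu_v$ missing at most $K^*_{i_{d-1}},K^*_{j_{d-1}}$, charge $\lambda_v$ to $\mu_v$ (at most twice), and apply Clarkson--Shor together with the planar $O(r)$ bound of Edelsbrunner--Sharir to get $O(n)$ such lines per $2$-plane, times $O(n^{2d-4})$ choices of $H$. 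The inductive step then runs two passes. The \emph{naive} pass proves $N_0^{(s)}(n)=O(n^{2d-3}\log^s n)$ by induction on $s$: each $0$-level $s$-border $(v,R)$ either has a popular bounding $(s-1)$-cell $g_q$ (charge to a $0$-level $(s-1)$-border, bounded by the inductive hypothesis and $N_0^{(s-1)}$) or charges its $s$ neighboring $1$-level $s$-borders along the edges $e_q^-$; this gives recurrence~(\ref{Eq:ChargeNaiveHigh}), which combined with the Clarkson--Shor/Tagansky sampling identity~(\ref{Eq:HighCS}) yields $\tfrac{n-2d+3}{n}N_0^{(s)}(n)\le N_0^{(s)}(n-1)+O(n^{2d-4}\log^{s-1}n)$, solving to $O(n^{2d-3}\log^s n)$. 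The \emph{refined} pass, again by induction on $s$ starting from the already-established $N_0^{(1)}(n)=O(n^{2d-3}\log n)$, upgrades the sampling term: count $1$-level $s$-borders with multiplicity equal to their weight $w$, so~(\ref{Eq:HighCS}) becomes exact, and then show that a weight-$1$ $s$-border charged by all $s$ neighbors is rare --- this is Lemma~\ref{Thm:RefineHighDim}.

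The heart of the argument, and the step I expect to be the main obstacle, is Lemma~\ref{Thm:RefineHighDim}. For $s\ge 3$ one shows such borders simply do not exist for regular $v$: if $(v,R)$ has weight $1$ and is charged by all $s$ neighbors $(v_k,R_k)$, a pigeonhole argument on the single conflicting set $K_b$ forces, for every $k$, the neighbor's extra sphere $C_{a_kb_k}$ to satisfy $\{a_k,b_k\}=\{i_k\text{ or }j_k,\;b\}$; then, comparing the ``twin'' permutations generated around $v_k$ and $v_\ell$ (which differ by swapping $K_b$ past $K_{i_k}$ and then past $K_{i_\ell}$), one deduces $K_{i_k}$ and $K_{i_\ell}$ are adjacent in the permutation --- impossible for \emph{all} pairs $k\ne\ell$ once $s\ge 3$. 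For $s=2$ this adjacency obstruction vanishes, and one must mimic the three-dimensional proof of Lemma~\ref{Thm:Refine}: trace the face $f$ further, and either find a $1$-level $1$-border to charge (giving the $O(N_1^{(1)}(\K))$ term, itself $O(n^{2d-3}\log n)$ by a further Clarkson--Shor reduction~(\ref{Eq:HighClarksonShor})), or reduce to a $K_{3,3}$-freeness argument (Lemmas~\ref{nok33} and~\ref{zonesq}) inside the arrangement $\A_{i_1}^{(\sigma_0)}$ of circles $C_{i_1x}\cap\sigma_0$ on the $2$-sphere $\sigma_0=\bigcap_{q=3}^{d-1}C_{i_qj_q}$, contributing $O(n^3)$ per choice of $\sigma_0$, hence $O(n^{2d-3})$ overall. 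Granting Lemma~\ref{Thm:RefineHighDim}, the refined recurrence becomes $\tfrac s{s-1}N_0^{(s)}(\K)\le N_1^{(s)}(\K)+O(n^{2d-3}\log n)$; setting $t=\tfrac1{s-1}>0$ and combining with~(\ref{Eq:HighCS}) gives $\tfrac{n-2d+3+t}{n}N_0^{(s)}(n)\le N_0^{(s)}(n-1)+O(n^{2d-4}\log n)$, whose solution is $O(n^{2d-3}\log n)$. Taking $s=d-1$ yields $N_0^{(d-1)}(n)=O(n^{2d-3}\log n)$, which bounds $g_d(n)$ and proves the theorem.
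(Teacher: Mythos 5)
Your proposal follows the paper's own argument essentially verbatim: the same Wenger setup on $\SS^{d-1}$, the same base case (Theorem~\ref{Thm:NZeroHigh} via Lemmas~\ref{Thm:ConsecutivePairsHigh} and~\ref{Thm:PopularLineHigh}, Clarkson--Shor, and the planar Edelsbrunner--Sharir bound), the same naive and refined charging passes with weighted $1$-level $s$-borders, and the same treatment of Lemma~\ref{Thm:RefineHighDim} (non-existence for $s\ge 3$ via the twin-permutation adjacency obstruction, and the $K_{3,3}$-type argument in $\A_{i_1}^{(\sigma_0)}$ for $s=2$), culminating in the recurrence with $t=\frac{1}{s-1}$ and the bound $N_0^{(d-1)}(n)=O(n^{2d-3}\log n)$. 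The proposal is correct and matches the paper's proof.
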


\section{Discussion}\label{Sec:Discuss}

Although the improvement presented in this paper is significant,
especially since no progress was made on the problem during the past 20 years, it is far
from satisfactory, since we strongly believe (and tend to conjecture)
that the correct upper bounds are close to $O(n^{d-1})$, for any 
$d\ge 3$. Improving further the bounds is the main open problem 
left by this study.
A modest subgoal is to get rid of the logarithmic factor in our bounds, and show, e.g., that $g_3(n)=O(n^3)$.

\end{document}